\documentclass[12pt,journal,draftcls,onecolumn]{IEEEtran}
\usepackage{bbm}
\usepackage[dvips]{graphicx}
\usepackage{epsfig}
\usepackage[cmex10]{amsmath}
\usepackage{amssymb}
\usepackage{amsthm}
\usepackage{amsfonts}
\usepackage{bm}
\usepackage{cite}
\usepackage{hyperref}
\usepackage[svgnames]{xcolor} 
\usepackage{pstricks,pst-node,pst-plot,pstricks-add}
\usepackage[tight,footnotesize]{subfigure}
\usepackage{algpseudocode}
\usepackage{algorithm}
\usepackage{amsthm}
\theoremstyle{theorem}
\newtheorem{theorem}{Theorem}
\theoremstyle{lemma}
\newtheorem{lemma}{Lemma}
\theoremstyle{corollary}
\newtheorem{corollary}{Corollary}
\theoremstyle{remark}

\theoremstyle{definition}

\graphicspath{{figs/}}
\DeclareMathOperator{\sign}{sign}

\interdisplaylinepenalty=2500  

\begin{document}
	
	\title{Private Inner Product Retrieval\\for Distributed Machine Learning}

	\author{
		\IEEEauthorblockN{Mohammad Hossein Mousavi$^{*}$, Mohammad Ali Maddah-Ali$^{\dagger}$, \\and Mahtab Mirmohseni$^{\dagger}$\vspace*{0.5em}\\
}		\IEEEauthorblockA{
				$^{*}$Department of Electrical Engineering, Sharif University of Technology, Tehran, Iran%
	\\$^{\dagger}$Nokia Bell Labs, Holmdel, NJ, USA}
			}

	\maketitle

	\begin{abstract}
		In this paper, we argue that in many basic algorithms for machine learning, including support vector machine (SVM) for classification, principal component analysis (PCA) for dimensionality reduction, and regression for dependency estimation, we need the inner products of the data samples, rather than the data samples themselves.
		
		Motivated by the above observation,  we introduce the problem of \emph {private inner product retrieval for distributed machine learning}, where we have a system including a database of some files, duplicated across some non-colluding servers. A user intends to retrieve a subset of specific size of the inner products of the data files with minimum communication load, without revealing any information about the identity of the requested subset.  For achievability, we use the algorithms for multi-message private information retrieval. For converse, we establish that as the length of the files becomes large, the set of all inner products converges to independent random variables with uniform distribution, and derive the rate of convergence. To prove that, we construct special dependencies among sequences of the sets of all inner products with different length, which forms a time-homogeneous irreducible Markov chain, without affecting the marginal distribution.  We show that this Markov chain has a uniform distribution as its unique stationary distribution, with rate of convergence dominated by the second largest eigenvalue of the transition probability matrix. This allows us to develop a converse, which converges to a tight bound in some cases, as the size of the files becomes large. While this converse is based on the one in multi-message private information retrieval due to the nature of retrieving inner products instead of data itself some changes are made to reach the desired result.
	\end{abstract}

	\section{Introduction}
	\label{sec:introduction}
	With the growth in data volume over recent years, the tasks of data storage and processing are often offloaded from in-house trusted systems to some external entities. Such distributed environments raise challenges, not experienced before.  One of the most important ones is privacy concern, which can have different interpretations. Based on the applications use-case, the private asset might be the training data, test data, and even the model parameters (the learning algorithm).  While the first two have been the subject of extensive research, from both computational cryptography and information-theoretic perspectives, the last one has been less understood.
	
	In the privacy of the machine learning \emph{algorithms}, the goal is to ensure the privacy of the parameters. Many different scenarios can be considered in which the parameters are in danger of breaching, and need to be addressed. Here, we focus on the case, where the learner must download some data samples from the servers to train the model.  In this case, the learner wants to keep the identity of this subset hidden from the servers. The reason is that in many cases, revealing the identity of the selected training samples would reveal considerable information about the intention of the learner, and can be used to guess the learning algorithm and calculate parameters of the model. For example, assume that learner downloads some training samples from a server to train a classification algorithm, say support vector machines (SVM). The server can easily guess that, and run the same algorithm, and gain full knowledge about the intention and the model.
	
	In this paper, we investigate the above privacy concern in a distributed setting, while our goal is to achieve privacy in a fundamental and information-theoretic level where no information is revealed about the algorithms to data owners. We argue that some of the most basic machine learning algorithms in different areas, including but not limited to SVM for classification, regression for relationship estimation, and principal component analysis (PCA) for dimensionality reduction, share an important feature in using sample data in their algorithm. To run these methods, the learner needs the inner products of the data files instead of the raw data. This can be particularly important when the length of input vectors is large compared to the number of data used for learning.

	On a separate line of research, the privacy in distributed settings, referred to as private information retrieval (PIR), is investigated.  In \cite{sun2017capacity}, the basic setup of PIR is studied, where the goal is to retrieve a file from a dataset, replicated in some non-colluding servers, without revealing its index. In particular, the capacity, as the infimum of the normalized download rate, is characterized. This is followed by \cite{sun2018capacity_1,sun2018capacity_2,banawan2018capacity,banawan2018noisy,wei2018fundamental} for different cases such as symmetric privacy, possibility of collusion among the servers,  and coded storage instead of uncoded replication of data files in servers. In particular, in \cite{banawan2018multi}, the multi-message PIR (MPIR) problem is studied, where the objective is to privately download a subset of files, instead of just one, and the capacity is approximately, and in some cases tightly,  characterized. The problem of retrieving a linear function of files from the servers, referred to as private computation (PC) or private function retrieval (PFR),  is investigated in~\cite{sun2018capacity} and~\cite{mirmohseni2018private}. In~\cite{obead2018capacity} the capacity for private linear computation in MDS coded databases is studied. Recently the new problem of retrieving a polynomial function of files from some servers has been introduced and discussed in~\cite{raviv2018private} and \cite{obead2019private} by using Lagrange encoding in coded databases.
	
	In this paper, we study a system, including a dataset of $K$ files, replicated across $N$ non-colluding servers. A user (learner) wishes to retrieve a subset of inner products out of all possible inner products of $K$ data files, without revealing the identity of the subset to each server. We prove that as the length of files, $L$, goes to infinity, the set of inner products of all data files (listed in the vector $X^{(L)}$) converges, in distribution, to a set of mutually independent uniform random variables.  To show that, we introduce some dependencies in the sequence of $X^{(L)}$, $L=1,2,\ldots$,  while keeping the marginal distribution of $X^{(L)}$ the same. Thanks to this dependency, we show that $\{ X^{(L)} \}_{L=1}^{\infty}$ forms a time-homogeneous irreducible Markov chain, with uniform distribution as its unique stationary distribution. Moreover, the rate of convergence is governed by the second largest eigenvalue $\lambda_2$ of the transition probability matrix, where $|\lambda_2| \leq 1$.
	This property motivates us to suggest MPIR as an achievable scheme.
	In addition, we rely on the above property to develop a converse which becomes tight in some case, as the length of files goes to infinity. While this converse is based on \cite{banawan2018multi}, a few changes are needed to be made to reach our goal. This is because of the difference in retrieving inner products instead of data files in \cite{banawan2018multi}. For example, the number of possible inner products cannot be any arbitrary integer which forces us to introduce an equivalent problem with arbitrary number of inner products in the process of reaching converse results.
	
	The organization of the paper is as follows: In Section~\ref{sec:motivation}, we discuss and motivate why retrieving the set of inner products are critical in machine learning. Next in Section~\ref{sec:problem}, we formally define the problem setting. We state our main results in Section~\ref{sec:main} and their proofs in Section~\ref{sec:Proof of Correctness}.
	
	\section{Background and Motivation}
	\label{sec:motivation}
	
	In what follows, we review some of the most basic machine learning algorithms, in three areas of classification, regression, and dimension reduction, and
	show that all three are based on the inner products of the samples, rather than the samples.
	
	1. \emph{Support vector machines (SVM)}: The SVM is one of the basic classification algorithms, where the goal is to correctly label the data files. This algorithm has many use cases such as face detection, bioinformatics (gene classifications), text categorization and etc. Here, we describe a simple case of SVM from~\cite[Page 63]{mohri2018foundations} and we discuss that knowing the inner products is enough to run the algorithm (instead of knowing the entire database).
	
	Consider an input alphabet $\mathcal{X}$ consisting of length $L$ vectors, a target output alphabet $\mathcal{Y} = \{-1, 1\}$ and a distribution $\mathcal{D}$ on $\mathcal{X}\times \mathcal{Y}$.
	The learner has $m$ training samples from $\mathcal{X} \times \mathcal{Y}$, denoted by $(\mathbf{x}_1, y_1), (\mathbf{x}_2, y_2), \ldots, (\mathbf{x}_m, y_m)$, drawn from $\mathcal{D}$. The goal here is to find function $h:\mathcal{X}\to\mathcal{Y}$ from hypothesis set $\mathcal{H}$, such that the following generalization error is minimized over $\mathcal{H}$:
	\begin{align}
	R_{\mathcal{D}}(h) = \Pr_{(\mathbf{x}, y) \sim \mathcal{D} }\{h(\mathbf{x})\neq y\}.
	\end{align}
	
	Although many different hypotheses sets exist, $\mathcal{H}$ can be chosen as described in \cite{mohri2018foundations} as a linear classifier defined as follows,
	\begin{align}
	\mathcal{H} = \{\mathbf{x} \mapsto \sign(\langle\mathbf{w},\mathbf{x}\rangle + b) | \mathbf{w} \in \mathbb{R}^{L} , b \in \mathbb{R}\} .
	\end{align}
	
	The solution to this problem boils down to solving the following convex optimization problem:
	\begin{align}
	&\min_{\mathbf{w},b} \dfrac{1}{2}||\mathbf{w}^2||\\
	\text{subject to: }&y_i(\langle\mathbf{w},\mathbf{x}_i\rangle + b) \geq 1, \forall i \in [1:m],
	\end{align}
	where for any integer $m$, $[1:m]$ denotes $\{1,\ldots,m\}$. This notation is used throughout this paper.
	The above problem can be solved by introducing Lagrange variables $\alpha_i \geq 0, i \in [1:m]$ for each constraint. Thus, the dual form of the constrained optimization problem is derived as following.
	\begin{align}
	&\max_{\alpha_i, i \in [1:m]} \sum_{i=1}^{m}\alpha_i - \dfrac{1}{2}\sum_{i,j=1}^{m}\alpha_i\alpha_j y_i y_j\langle\mathbf{x}_i,\mathbf{x}_j\rangle\label{eqn:SVM1}\\
	&\textrm{subject to: }\alpha_i \geq 0 \textrm{ and } \sum_{i=1}^{m}\alpha_i y_i = 0, \forall i \in [1:m].
	\end{align}
	
	Solving the dual problem on $\alpha_i, i \in [1:m]$, we have:
	\begin{align}
	\mathbf{w} &= \sum_{i=1}^{m}=\alpha_iy_i\mathbf{x}_i\;\;,\;\;b = y_i - \sum_{j=1}^{m}\alpha_jy_j\langle\mathbf{x}_j,\mathbf{x}_i\rangle.\label{eqn:SVM4}
	\end{align}		
	
	As is clear from \eqref{eqn:SVM1}-\eqref{eqn:SVM4}, in order to solve the main problem for $\mathbf{w},b$, we only need the inner products of samples and their labels to solve the dual problem for $\alpha_i$ and a linear combination of data samples to get $\mathbf{w}$~\footnote{To having a linear combination of the samples privately, we can use a scheme called \emph{private function retrieval}}.
	So, when the length of vectors $\mathbf{x}_i$, $L$, is large, retrieving inner products instead of raw samples is more efficient in a distributed learning setting.

2. \emph{Regression}: The regression algorithm predicts the real-valued label of a point by using a data set. Regression is a very common task in machine learning for approximately and closely deriving the relationship between variables. The regression is similar to continuous-label version of the classification, as opposed to the classification's discrete labels.
Many use cases can be considered for the regression algorithm, such as optimizing the price of products by learning the relation of price and the sale volume in different markets and analyzing the product sale drivers such as distribution methods in markets.
Here, we first describe a simple regression problem from~\cite[Page 245]{mohri2018foundations} and show in order to solve this problem we only need the inner products as opposed to retrieve all data files.

Similar to SVM, consider an input alphabet $\mathcal{X}$ consisting of vectors of length $L$ and a distribution $\mathcal{D}$ on $\mathcal{X}\times \mathcal{Y}$.  The learner has $m$ training samples from $\mathcal{X} \times \mathcal{Y}$, denoted by $(\mathbf{x}_1, y_1), (\mathbf{x}_2, y_2), \ldots, (\mathbf{x}_m, y_m)$, drawn from $\mathcal{D}$.
The difference is that the target output alphabet $\mathcal{Y}$ can be a continuous space. Since the labels are real numbers, the learner is not able to predict them precisely. So, a loss function is considered to show the distance between the label and the predicted value.
	
	Now, we discuss a simple linear regression problem. Similar to SVM, the hypothesis set is as follows.
	\begin{align}
	\mathcal{H} = \{\mathbf{x} \mapsto \langle\mathbf{w},\mathbf{x}\rangle + b | \mathbf{w} \in \mathbb{R}^{L} , b \in \mathbb{R}\}.
	\end{align}
	The loss here is empirical mean squared error. So, the optimization problem is as follows,
	\begin{align}
	\min_{\mathbf{w},b}\dfrac{1}{m}\sum_{i=1}^{m}(\langle\mathbf{w},\mathbf{x}_i\rangle + b - y_i)^2,
	\end{align}
	which can be written in a simpler form as:
	\begin{align}
	\min_{\tilde{\mathbf{w}}}F(\tilde{\mathbf{w}}) = \dfrac{1}{m}||\mathbf{X}^\top\tilde{\mathbf{w}} - \mathbf{y}||^2,
	\end{align}
	where $\mathbf{X} = \begin{bmatrix} \mathbf{x_1} &... &\mathbf{x_m}\\ 1 &... &1 \end{bmatrix}$, $\tilde{\mathbf{w}} = \begin{bmatrix} w_1 \\\vdots \\w_L\\ 1 \end{bmatrix}$ and  $\mathbf{y} = \begin{bmatrix} y_1 \\\vdots \\y_m \end{bmatrix}$. It is clear that the objective functions is convex and reaches its optimum value in $\nabla F(\tilde{\mathbf{w}}) = 0$. So, we have:
	\begin{align}
	\dfrac{2}{m}\mathbf{X}(\mathbf{X}^\top\tilde{\mathbf{w}} - \mathbf{y}) = 0 \Leftrightarrow \mathbf{X}\mathbf{X}^\top\tilde{\mathbf{w}} = \mathbf{X}\mathbf{y}.
	\end{align}
	
	Now, if $\mathbf{X}\mathbf{X}^\top$ is invertible, we can calculate $\tilde{\mathbf{w}}$. Otherwise, we replace the inverse with pseudo-inverse.
	\begin{align}
	\tilde{\mathbf{w}} = \begin{cases} (\mathbf{X}\mathbf{X}^\top)^{-1}\mathbf{X}\mathbf{y} &\text{if } \mathbf{X}\mathbf{X}^\top \text{ is invertible}\\ (\mathbf{X}\mathbf{X}^\top)^\dagger\mathbf{X}\mathbf{y} &\text{otherwise}\end{cases}.
	\end{align}
	
	It can be easily shown that the above result can be rewritten as below.
	\begin{align}
	\tilde{\mathbf{w}} = \begin{cases} \mathbf{X}(\mathbf{X}^\top\mathbf{X})^{-1}\mathbf{y} &\text{if } \mathbf{X}^\top\mathbf{X} \text{ is invertible}\\ \mathbf{X}(\mathbf{X}^\top\mathbf{X})^\dagger\mathbf{y} &\text{otherwise}\end{cases}.
	\end{align}
	
	As seen, the solution only needs inner products ($\mathbf{X}^\top\mathbf{X}$) and a linear combination of data files ($\tilde{\mathbf{w}} = \mathbf{X}\mathbf{a},$ $\mathbf{a} = (\mathbf{X}^\top\mathbf{X})^{-1}\mathbf{y}$) and not all data files. If the length of data vectors, $L$, is large, downloading all data files needs much more resource.
	
	3. \emph{Principal component analysis (PCA)}: The purpose of this algorithm is to reduce the dimensionality of data with large vector length, so that its most important features can be better analyzed. The reason is that sometimes the generalization ability of method decreases with the increase in dimension of data. The following example is from~\cite[Page 324]{shalev2014understanding}.
	
	Consider the $m$ vectors of length $L$, $\mathbf{x}_1,...,\mathbf{x}_m$, as data files. The goal here is to reduce the dimensionality of these vectors using linear transformation. To do this, we define a matrix $\mathbf{W}\in \mathbb{R}^{d\times L}$ where $d < L$. We also have a mapping $\mathbf{x}\mapsto\mathbf{Wx}$, whose output is the lower dimensionality representation of data. Then a second matrix $\mathbf{U}\in \mathbb{R}^{L\times d}$ is defined to recover $\mathbf{x}$. This means that if $\mathbf{y} = \mathbf{Wx}$ is the reduced representation, then the $\tilde{\mathbf{x}} = \mathbf{Uy}$ is the recovered data. Minimizing the magnitude of empirical distance between the original data and the recovered data is the goal of PCA.
	\begin{align}
	\arg\min_{\mathbf{W},\mathbf{U}}\sum_{i=1}^{m}||\mathbf{x}_i - \mathbf{UW}\mathbf{x}_i||^2.
	\end{align}
	It is shown in \cite{shalev2014understanding} that $\mathbf{U}^\top=\mathbf{W}$ and this problem can be rewritten as follows.
	\begin{align}
	\arg\min_{\mathbf{U}}\sum_{i=1}^{m}||\mathbf{x}_i& - \mathbf{U}\mathbf{U}^\top\mathbf{x}_i||^2,\\
	\text{subject to: } &\mathbf{U}^\top\mathbf{U} = \mathbf{I},
	\end{align}	
	where $\mathbf{I}$ is the identity matrix. According to Theorem 23.2 in~\cite[Page 325]{shalev2014understanding} the solution for above problem is to calculate $\mathbf{u}_1,...,\mathbf{u}_d$ which are eigenvectors of matrix $\mathbf{A} = \sum_{i=1}^{m}\mathbf{X}\mathbf{X}^\top$ ($\mathbf{X} = [\mathbf{x_1} ... \mathbf{x_m}]$) corresponding to $d$ largest eigenvalues of the matrix. The solution is $\mathbf{U} = [\mathbf{u}_1 ... \mathbf{u}_d]$. 
	
	If the dimension of the original vectors is too large ($L\gg m$), then we can rewrite the answer. We define $\mathbf{B} = \mathbf{X}^\top\mathbf{X}$. Let $\mathbf{u}$ be an eigenvector of matrix $\mathbf{B}$ (so $\mathbf{B}\mathbf{u} = \lambda\mathbf{u}$). This means that we have $\mathbf{X}^\top\mathbf{X}\mathbf{u} = \lambda\mathbf{u}$ and thus,
	\begin{align}
	\mathbf{X}\mathbf{X}^\top\mathbf{X}\mathbf{u} = \lambda\mathbf{X}\mathbf{u} \Rightarrow \mathbf{A}\mathbf{X}\mathbf{u} = \lambda\mathbf{X}\mathbf{u}.
	\end{align}
	
	Therefore, if $\mathbf{u}$ is an eigenvector of $\mathbf{B}$, corresponding to eigenvalue $\lambda$, then $\mathbf{Xu}$ is an eigenvector of matrix $\mathbf{A}$, corresponding to the same eigenvalue. So in PCA, when vector length $L$ is large, it is simpler to calculate the matrix $\mathbf{X}^\top\mathbf{X}$ that is matrix of inner products of original data.
	Then, the eigenvectors of this matrix corresponding to its $d$ largest eigenvalues are enough.
	
	These three algorithms make clear that methods using inner products of data files are important and common tasks of machine learning. Thus, retrieving the inner products privately from the servers is an important step in machine learning privacy.
	
\section{Problem Statement}
\label{sec:problem}
Consider a set of  $K$ data files, $W_1, \ldots, W_K$, for some integer $K$, where files are selected independently and uniformly at random from a finite field
$\mathbb{F}(q^L)$, for some integer $L$. Thus,
\begin{align}
H(W_1, W_2, \ldots, W_K)=LK \log(q).
\end{align}
Files can be represented in the vector form as
\begin{align}
&W_k = (w_{k1},...,w_{kL})^\top\nonumber\\
&w_{k \ell }\in\ \mathbb{F}(q), \textrm{for} \ k \in [1:K],  \ell \in [1:L].
\end{align}

We assume that files are replicated in $N$ non-colluding servers, for some integer $N$.
We define $\mathcal{X}^{(L)}$,  as the set of the inner product of all pairs of data files,
\begin{align}
\mathcal{X}^{(L)} = \{\langle W_i,W_j\rangle,\ \forall i,j \in [1:K]\}.
\end{align}
Also, we define $\mathcal{T}$ as index of inner products as follows,
\begin{align}
\mathcal{T}=\{\{i,j\},\ \forall i,j\in\{1,2,...,K\}\}.
\end{align}
Note that each member of $\mathcal{T}$ corresponds to an inner product in set $\mathcal{X}^{(L)}$, i.e., $\{i,j\}\in\mathcal{T} \iff \langle W_i,W_j\rangle \in \mathcal{X}^{(L)}$.

A user wishes to retrieve a subset of size $P \in \mathbb{N}$ of inner products. More precisely, the user chooses a set  $\mathcal{P}$, where  $\mathcal{P} \subseteq \mathcal{T}$, and $|\mathcal{P}|=P$, and entreats to know $\mathcal{X}^{(L)}_\mathcal{P}$, defined as
\begin{align}
&\mathcal{X}^{(L)}_{\mathcal{P}}=\{\langle W_i,W_j\rangle,  \forall \{i,j\}\in\mathcal{P}\}.
\end{align}
The cardinality $P$ of $\mathcal{P}$ is known to all servers.
The user wishes to retrieve $\mathcal{X}^{(L)}_{\mathcal{P}}$ while ensuring privacy of $\mathcal{P}$ from each server.

In order to retrieve these inner products user creates queries $Q_1^{[\mathcal{P}]},...,Q_N^{[\mathcal{P}]}$ and sends $Q_n^{[\mathcal{P}]}$ to server $n$, through an error-free secure link. In response, server $n$, responds with $A_n^{[\mathcal{P}]}$. Since user has no knowledge of files,
\begin{align}\label{eqn:indQ}
I(W_1,...,W_K;Q_1^{[\mathcal{P}]},...,Q_N^{[\mathcal{P}]}) = 0.
\end{align}
The answer of server $n$, $n\in [1:N]$,  is a function of query sent to that server
and the set of data files available there, thus	
\begin{align}
H(A_n^{[\mathcal{P}]}|W_1,...,W_K,Q_n^{[\mathcal{P}]}) = 0 \label{reliC}.
\end{align}
Also, $A_{n_1:n_2}$ denotes set $\{A_{n_1},A_{n_1+1},...,A_{n_2}\}$.
The queries and answers must satisfy two conditions:

{\bf (i) Correctness Condition:} This condition states that by having all queries and answers from servers, the user can calculate inner products indexed by the set $\mathcal{P}$. Equivalently,
\begin{align}
H(\mathcal{X}^{(L)}_\mathcal{P}|\ A_{1:N}^{[\mathcal{P}]},Q_{1:N}^{[\mathcal{P}]}) = 0 \label{corC}.
\end{align}

%
{\bf (ii) Privacy Condition:} In order to satisfy privacy,  regardless of what set $\mathcal{P}$ is chosen, query and answer for each server must be identically distributed, i.e., $\forall \mathcal{P}_1, \mathcal{P}_2 \subseteq \mathcal{T}$ , $|\mathcal{P}_1|=|\mathcal{P}_2|=P$, we must have,
\begin{align}
\label{privacy}
\!\!\!\!(Q_n^{[\mathcal{P}_1]}, A_n^{[\mathcal{P}_1]}, W_1,...,W_K)\!\sim\!(Q_n^{[\mathcal{P}_2]}, A_n^{[\mathcal{P}_2]}, W_1,...,W_K).
\end{align}
For an achievable scheme, satisfying~\eqref{corC} and \eqref{privacy},  we define the retrieval rate $R(P,L)$, as the ratio between information of the inner products in $\mathcal{X}^{(L)}_\mathcal{P}$ and total downloading cost to retrieve the inner products $\mathcal{X}^{(L)}_\mathcal{P}$, minimized over all possible requests $\mathcal{P} \subseteq \mathcal{T}$,  $|\mathcal{P}|=P$, i.e.,
\begin{align}
R(P,L) = \min_{\mathcal{P} \subseteq \mathcal{T},  |\mathcal{P}|=P  }\dfrac{H(\mathcal{X}^{(L)}_\mathcal{P})}{\sum_{n=1}^NH(A_n^{[\mathcal{P}]})}.
\end{align}

The capacity is the supremum of all achievable $R(P,L)$.
%
%
%

\section{Preliminary}
In order to proceed we need to review the results of MPIR problem in \cite{banawan2018multi}. Consider a  system, including $K$ data files, replicated in $N$ noncoluding  servers. Each data file is chosen independently and uniformly at  random from the finite field $\mathbb{F}(q^L)$. A user wishes to retrieve a subset indexed by $\mathcal{P} \subseteq [1:K]$ of data files, ensuring the privacy of $\mathcal{P}$. Assume $|\mathcal{P}| = P$, where $P$  is known publicly. Rate is defined as information of subset of data files indexed by $\mathcal{P}$ over download cost,  and the capacity $C_{\mathsf{MPIR}}$ is defined as the supremum over all rates in privacy preserving schemes. Then we have~\cite{banawan2018multi},
\begin{align}
\underline{R}_{\mathsf{MPIR}}(K,P,N) \leq {C}_{\mathsf{MPIR}} \leq  \overline{R}_{\mathsf{MPIR}}(K,P,N)\label{eqn:MPIR},
\end{align}
where for $\frac{K}{P} \leq 2$, we have
\begin{align}
\dfrac{1}{\overline{R}_{\mathsf{MPIR}}(K,P,N)}=\dfrac{1}{\underline{R}_{\mathsf{MPIR}}(K,P,N)}=1+\dfrac{K - P}{PN},\nonumber
\end{align}
and for $\frac{K}{P} \geq 2$, we have
\begin{align}
\dfrac{1}{\overline{R}_{\mathsf{MPIR}}(K,P,N)} =
\sum_{i=0}^{\text{\tiny{\(\left\lfloor \frac{K}{P} \right\rfloor - 1\)}}}\dfrac{1}{N^i}
+ \left(\frac{K}{P}- \left\lfloor \frac{K}{P} \right\rfloor\right)\dfrac{1}{N^{\text{\tiny{\(\left\lfloor \frac{K}{P} \right\rfloor\)}}}},\nonumber
\end{align}
and $\dfrac{1}{\underline{R}_{\mathsf{MPIR}}(K,P,N)}$ is equal to
\begin{align}
\dfrac{\sum_{i=1}^{P}\beta_ir_i^{\text{\tiny{\(K-P\)}}}\left[\left(1+\dfrac{1}{r_i}\right)^{\text{\tiny{\(K\)}}} - \left(1+\dfrac{1}{r_i}\right)^{\text{\tiny{\(K-P\)}}}\right]}{\sum_{i=1}^{P}\beta_ir_i^{\text{\tiny{\(K-P\)}}}\left[\left(1+\dfrac{1}{r_i}\right)^{\text{\tiny{\(K\)}}} - 1\right]},\nonumber
\end{align}
where $r_i$ is defined as $	r_i = \dfrac{e^{\hat{j}2\pi(i-1)/P}}{N^{1/P} - e^{\hat{j}2\pi(i-1)/P}}, i\in [1:P],$
and $\hat{j} = \sqrt{-1}$. In addition,  $\beta_i, i\in [1:P]$, is the solution of the set of linear equations $\sum_{i=1}^{P}\beta_ir_i^{-P}=(N-1)^{K - P}$ and $\sum_{i=1}^{P}\beta_ir_i^{-k}=0$,  $k\in[1:P-1]$.

\section{Main Results}
\label{sec:main}
The main result is stated in the following theorem.

\begin{theorem}\label{mpl2}
	For a system with $K$ files in $\mathbb{F}(q^L)$ and $N$ servers, where the user is interested in a subset of size $P$ of inner products,  we have
	\begin{align}
	\dfrac{1}{\underline{R}_{\mathsf{MPIR}}(K(K+1)/2,P,N)} - &O(\lambda_2^{L-1}) < \dfrac{1}{C}  \label{eqn:main}\\
	&\leq \dfrac{1}{\overline{R}_{\mathsf{MPIR}}(K(K+1)/2,P,N)},\nonumber
	\end{align}
	where $\lambda_2$ is a constant independent of $L$ and $|\lambda_2| < 1$.
\end{theorem}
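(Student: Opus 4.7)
The plan is to prove the two bounds separately by reducing the inner-product retrieval problem to MPIR on $M := K(K+1)/2$ derived messages, one per unordered pair in $\mathcal{T}$. The right-hand inequality $1/C \leq 1/\overline{R}_{\mathsf{MPIR}}(M,P,N)$ is achievability: I would run the MPIR scheme of \cite{banawan2018multi} directly on the locally-computable set of inner products. The left-hand inequality $1/\underline{R}_{\mathsf{MPIR}}(M,P,N) - O(\lambda_2^{L-1}) < 1/C$ is the converse: I would adapt the MPIR converse to a setting where the ``messages'' are nearly i.i.d.\ uniform, with the $O(\lambda_2^{L-1})$ error controlled by a Markov-chain analysis.

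For the achievability, observe that each server stores all $K$ files and can therefore compute the entire set $\mathcal{X}^{(L)}$ of $M$ inner products locally, at zero communication cost. Treating $\mathcal{X}^{(L)}$ as a new derived database of $M$ messages, the user and servers simply execute the MPIR achievability scheme with parameters $(M,P,N)$. The correctness \eqref{corC} and privacy \eqref{privacy} conditions are inherited from MPIR because the protocol depends only on the requested subset indices and the data available at each server, not on the joint distribution of the underlying messages. This establishes the right-hand side of \eqref{eqn:main}.

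For the converse, the key step is to show that the $M$ inner products are approximately i.i.d.\ uniform over $\mathbb{F}(q)$, with total-variation error decaying geometrically in $L$. To this end, I would couple the sequence $\{X^{(L)}\}_{L \ge 1}$ into a time-homogeneous Markov chain while preserving every marginal. Concretely, extend the files coordinate-by-coordinate by appending a fresh uniform symbol $w_{k,L+1} \in \mathbb{F}(q)$ to each, so that $W_k^{(L+1)} = (W_k^{(L)}, w_{k,L+1})^\top$ and
\begin{align}
X^{(L+1)} = X^{(L)} + Y_{L+1}, \qquad Y_{L+1} = \bigl(w_{i,L+1} w_{j,L+1}\bigr)_{\{i,j\} \in \mathcal{T}},
\end{align}
with $Y_{L+1}$ independent of $X^{(L)}$ and i.i.d.\ across $L$. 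This realizes a random walk on the additive group $\mathbb{F}(q)^M$ with step law $\mu_Y$. The support of $\mu_Y$ generates $\mathbb{F}(q)^M$ additively (by activating one or two coordinates of the $w_k$'s at a time), and the walk is aperiodic because the zero increment has positive probability. Classical spectral theory for finite-state Markov chains then yields the uniform distribution on $\mathbb{F}(q)^M$ as the unique stationary distribution, with the $L$-step total-variation distance from uniform bounded by $O(\lambda_2^{L-1})$, where $\lambda_2$ is the second-largest-magnitude eigenvalue of the transition matrix (independent of $L$, with $|\lambda_2|<1$). Continuity of entropy then gives $H(X^{(L)}) = M \log q - O(\lambda_2^{L-1})$ and analogous near-uniformity for every subset of the inner products.

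With this near-uniformity in hand, I would carry the entropy estimates through the MPIR converse chain of inequalities of \cite{banawan2018multi}. Each step there uses either independence between disjoint subsets of messages or symmetry under relabeling; in our setting both hold only approximately, with errors controllable by $O(\lambda_2^{L-1})$. The main obstacle, which I expect to be the most delicate part, is propagating these errors through the induction portion of the MPIR converse without them blowing up. Additionally, because $M = K(K+1)/2$ is a fixed integer rather than a free parameter, I would formulate an equivalent fictitious MPIR problem on $M$ approximately-independent messages (as hinted at in the introduction) so that the MPIR converse machinery applies directly, delivering the left-hand side of \eqref{eqn:main}.
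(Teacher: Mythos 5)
Your proposal follows essentially the same route as the paper: achievability by running the MPIR scheme on the $K(K+1)/2$ locally computable inner products, and a converse built on coupling $\{X^{(L)}\}$ into a random walk on $\mathbb{F}^{K(K+1)/2}(q)$ whose second-largest eigenvalue governs the $O(\lambda_2^{L-1})$ deviation from uniformity, then propagated through the MPIR converse (including the auxiliary problem with an arbitrary number of messages for the induction). The only understated step is your parenthetical justification that the increment's support generates the whole group: making a cross term $w_iw_j$ nonzero unavoidably makes the diagonal entries $w_i^2$ and $w_j^2$ nonzero in the same step, so the paper needs five coordinate extensions together with the fact that every element of $\mathbb{F}(q)$ is a sum of two squares to cancel them --- a fixable detail, not a change of approach.
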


\begin{corollary}
	\label{cor:mpl2}
	If $\dfrac{K(K+1)}{2P} \leq 2$, then we have
	\begin{align}
	\lim_{L\to\infty}\frac{1}{C} =1+\dfrac{K(K+1) - 2P}{2PN}.
	\end{align}
\end{corollary}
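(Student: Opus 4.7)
The plan is to derive the corollary as a direct specialization of Theorem~\ref{mpl2}, using the fact that under the hypothesis $\frac{K(K+1)}{2P} \le 2$ the MPIR upper and lower bounds quoted in the preliminary coincide, and that the error term in~\eqref{eqn:main} vanishes in the limit $L \to \infty$.

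First I would instantiate the MPIR bounds from Section~IV at the parameter $K' \triangleq K(K+1)/2$, with the same $P$ and $N$. The hypothesis $\frac{K(K+1)}{2P} \le 2$ is exactly the condition $K'/P \le 2$, so the first case of the MPIR formula in the preliminary applies, yielding
\begin{align}
\frac{1}{\overline{R}_{\mathsf{MPIR}}(K',P,N)}
= \frac{1}{\underline{R}_{\mathsf{MPIR}}(K',P,N)}
= 1 + \frac{K' - P}{PN}
= 1 + \frac{K(K+1)-2P}{2PN}.
\end{align}
Thus the outer and inner quantities appearing in~\eqref{eqn:main} are equal and independent of $L$.

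Next I would take the limit $L \to \infty$ in Theorem~\ref{mpl2}. Since $|\lambda_2|<1$ and $\lambda_2$ does not depend on $L$, we have $O(\lambda_2^{L-1}) \to 0$. The inequalities in~\eqref{eqn:main} therefore squeeze $1/C$ between two quantities that both equal $1 + \frac{K(K+1)-2P}{2PN}$ in the limit, giving the claimed identity.

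There is essentially no new obstacle here beyond invoking Theorem~\ref{mpl2}; the only point that deserves a brief comment in the write-up is that $\lambda_2$ is guaranteed to be bounded away from $1$ in absolute value (as stated in the theorem), so the vanishing of the error term is uniform and the sandwich argument is legitimate. I would keep the proof to a few lines, just spelling out the substitution $K \to K(K+1)/2$ and the limit.
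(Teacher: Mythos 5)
Your proposal is correct and is exactly the argument the paper intends: specialize the MPIR bounds at $K' = K(K+1)/2$ (where the case $K'/P \le 2$ makes $\overline{R}_{\mathsf{MPIR}}$ and $\underline{R}_{\mathsf{MPIR}}$ coincide) and let $L \to \infty$ so that the $O(\lambda_2^{L-1})$ term in~\eqref{eqn:main} vanishes, squeezing $1/C$ to the common value $1 + \frac{K(K+1)-2P}{2PN}$. No gaps.
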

\begin{corollary}
	\label{cor:mpg2}
	If $\dfrac{K(K+1)}{2P} \in \mathbb{N}$,  then  we have
	\begin{align}
	\lim_{L\to\infty} \dfrac{1}{C} = 1+ \dfrac{1}{N} + ... + \dfrac{1}{N}^{\text{\tiny{\(\dfrac{K(K+1)}{2P}-1\)}}}.
	\end{align}
\end{corollary}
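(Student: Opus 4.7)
The plan is to derive this corollary as a direct specialization of Theorem~\ref{mpl2} to the regime where the two MPIR bounds coincide. By that theorem, for every $L$ we already have
\begin{align}
\dfrac{1}{\underline{R}_{\mathsf{MPIR}}(K(K+1)/2,P,N)} - O(\lambda_2^{L-1}) < \dfrac{1}{C} \leq \dfrac{1}{\overline{R}_{\mathsf{MPIR}}(K(K+1)/2,P,N)},\nonumber
\end{align}
and since $|\lambda_2| < 1$ the correction term vanishes as $L \to \infty$. The corollary therefore reduces to checking that the lower and upper MPIR bounds, evaluated at $K' = K(K+1)/2$ files with subset size $P$, agree under the hypothesis $K'/P \in \mathbb{N}$.

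For the upper bound I would substitute $K' = K(K+1)/2$ into the closed form of $1/\overline{R}_{\mathsf{MPIR}}(K',P,N)$ listed in the Preliminary. When $K'/P \geq 2$ and $K'/P$ is a positive integer, the fractional remainder $K'/P - \lfloor K'/P\rfloor$ is zero, so the trailing term drops out and the expression telescopes to the geometric sum $\sum_{i=0}^{K(K+1)/(2P)-1} N^{-i}$, which is exactly the right-hand side in the statement of the corollary.

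For the lower bound, I would appeal to the tightness statement from \cite{banawan2018multi}: whenever the file-to-subset-size ratio is a positive integer at least two, the rational expression involving the $r_i$ and $\beta_i$ collapses to the same geometric sum, so $\underline{R}_{\mathsf{MPIR}}(K',P,N) = \overline{R}_{\mathsf{MPIR}}(K',P,N)$. Combining the two, the sandwich in Theorem~\ref{mpl2} forces $\lim_{L\to\infty} 1/C = 1 + \dfrac{1}{N} + \cdots + \dfrac{1}{N^{K(K+1)/(2P)-1}}$, which is the claim. The edge case $K(K+1)/(2P) = 1$ (so that $K'/P < 2$) falls under Corollary~\ref{cor:mpl2} instead, which gives $1/C = 1$ and matches the single-term geometric sum, so no case is missed.

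The argument carries no new technical burden. All the heavy lifting --- the Markov-chain construction on $\{\mathcal{X}^{(L)}\}$, the second-eigenvalue bound $|\lambda_2|<1$, and the MPIR-style converse adaptation --- is absorbed into Theorem~\ref{mpl2}. The only step proper to this corollary is the algebraic collapse of the MPIR bounds under the integer-ratio hypothesis, and this is already a feature of the MPIR capacity characterization cited from \cite{banawan2018multi}, not a fresh derivation.
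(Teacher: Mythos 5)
Your proposal is correct and matches the paper's (implicit) proof: the paper offers no separate argument for Corollary~\ref{cor:mpg2} beyond specializing Theorem~\ref{mpl2} with $K' = K(K+1)/2$, letting the $O(\lambda_2^{L-1})$ term vanish as $L\to\infty$, and invoking the fact from \cite{banawan2018multi} that $\underline{R}_{\mathsf{MPIR}}$ and $\overline{R}_{\mathsf{MPIR}}$ coincide with the geometric sum when the file-to-subset ratio is an integer. Your explicit handling of the boundary cases $K(K+1)/(2P)\in\{1,2\}$ via Corollary~\ref{cor:mpl2} is a small but welcome addition the paper leaves unstated.
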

The proof can be found in the next section.  Assuming $q$ is large enough, for achievability, we use the scheme of MPIR. For converse, we prove that as $L$ goes to infinity, entries of $\mathcal{X}^{(L)}$ converges to a set of independent random variables with uniform distribution, with the rate of convergence dominated by a constant $\lambda_2$, $|\lambda_2|\le 1$. For large $L$, in some cases, the achievable rate and converse match. In other cases, these two are very close.

	\section{Proof}
		\label{sec:Proof of Correctness}

We sort the elements of set $\mathcal{X}^{(L)}$ in a vector $X^{(L)} \in \mathbb{F}^{K(K+1)/2}(q)$, such that $\langle W_i,W_j\rangle$ in $X^{(L)}$ comes before $\langle W_k,W_l\rangle$ if $i<k$ or $i=k$ and $j<l$. Likewise, we sort the elements of $\mathcal{X}^{(L)}_\mathcal{P}$ in a vector  $X^{(L)}_{\mathcal{P}}$.
%
%

	In this section, we provide the proof for Theorem~\ref{mpl2}.
	
First we show that as $L\to\infty$, the distribution of $X^{(L)}$ converges to a uniform distribution over $\mathbb{F}^{K(K+1)/2}(q)$: 	
\begin{align}
\forall \mathbf{y} \in\mathbb{F}^{K(K+1)/2}(q): \ \lim_{L \rightarrow \infty } \Pr \{X^{(L)} = \mathbf{y} \} = \dfrac{1}{q^{K(K+1)/2}}\label{limUni}.
\end{align}
Indeed, we increase $L$ by one, and show that the distribution of $X^{(L)}$ over $\mathbb{F}^{K(K+1)/2}(q)$ becomes closer to a uniform distribution.
In addition, we derive the rate of convergence.

	
	
	Let us denote the $q^{K(K+1)/2}$ members of set $\mathbb{F}^{K(K+1)/2 }(q)$ by $\mathbf{y}_1\ ...\ \mathbf{y}_{q^{K(K+1)/2}}$, i.e.,
	\begin{align}
	\mathbb{F}^{K(K+1)/2 }(q)  = \{\mathbf{y}_1\ ...\ \mathbf{y}_{q^{K(K+1)/2}}\}
	\end{align}
	
	We denote the probability mass function of $X^{(L)}$ over  $\mathbb{F}^{K(K+1)/2 }(q)$ by $\mathbf{p}^{(L)} \in [0,1]^{ q^{K(K+1)/2}}$, i.e.
	\begin{align}
	\mathbf{p}^{(L)} = (p^{(L)}_1,...,p^{(L)}_{q^{K(K+1)/2}})^\top \in [0,1]^{ q^{K(K+1)/2}}
	\end{align}
	where
	\begin{align}
	\label{piL}			
	p^{(L)}_i = \Pr \{X^{(L)} = \mathbf{y}_i\}, \  i\in[1:q^{K(K+1)/2}].			
	\end{align}	
	Apparently,
	\begin{align}			
	\sum_{i=1}^{q^{K(K+1)/2} }p^{(L)}_i = 1.		
	\end{align}	
	
	Our goal is to investigate how $\mathbf{p}^{(L)}$ changes, as we increase $L$ to $L+1$.  Let
	\begin{align}
	W_i^{(L)} = (w_{i1},...,w_{iL})^{\top},  \  i \in [1:K].
	\end{align}
	Without loss of generality, we assume that
	\begin{align}
	W_i^{(L+1)} \triangleq  (w_{i1},...,w_{iL},w_{i(L+1)})^{\top} \ i \in [1:K],
	\end{align}
	where $w_{i(L+1)}$ is selected uniformly at random from $\mathbb{F}(q)$. We note that by this construction $X^{(L)}$ and $X^{(L+1)}$ become correlated. However, the distribution of $X^{(L+1)}$ is still the same as it was discussed in the problem formulation but this correlation allows us to derive the converging distribution.

	\begin{lemma}\label{ismarkov}
		The sequence ${\{ X^{(L)} \}}_{L=1}^{\infty}$ forms a Markov chain with a time-homogeneous transition probability  $ \mathbf{M} \in \mathbb{R}^{q^{K(K+1)/2}\times q^{K(K+1)/2}}$, i.e.
		\begin{align}
		\mathbf{p}^{(L+1)} = \mathbf{M}\mathbf{p}^{(L)},
		\end{align}
		where
		\begin{align}
		[M]_{i,j} = \Pr\{ \Delta^{(L,L+1)} = \mathbf{y}_i - \mathbf{y}_j \}, \ \forall i,j \in [1:q^{K(K+1)/2}].
		\end{align}		
	\end{lemma}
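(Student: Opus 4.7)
The plan is to exploit the additive decomposition of inner products across coordinates: since
\[
\langle W_i^{(L+1)}, W_j^{(L+1)} \rangle = \langle W_i^{(L)}, W_j^{(L)} \rangle + w_{i(L+1)} w_{j(L+1)},
\]
if we define the increment vector $\Delta^{(L,L+1)} \in \mathbb{F}^{K(K+1)/2}(q)$ whose entries are the products $w_{i(L+1)} w_{j(L+1)}$ in exactly the same ordering used to define $X^{(L)}$, then
\[
X^{(L+1)} = X^{(L)} + \Delta^{(L,L+1)}.
\]
This reduces the lemma to two observations about the increment.

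\textbf{Independence of the increment from the past.} The entries of $\Delta^{(L,L+1)}$ are functions only of the freshly sampled coordinates $w_{1(L+1)}, \ldots, w_{K(L+1)}$, each of which is drawn uniformly and independently from $\mathbb{F}(q)$ by construction. The entire history $(X^{(1)}, \ldots, X^{(L)})$ is a deterministic function of the first $L$ coordinates of the files, which are independent of the $(L+1)$-st coordinates. Hence $\Delta^{(L,L+1)}$ is independent of $(X^{(1)}, \ldots, X^{(L)})$, and we obtain the Markov property
\[
\Pr\bigl(X^{(L+1)} = \mathbf{y}_i \,\big|\, X^{(L)} = \mathbf{y}_j, X^{(L-1)}, \ldots, X^{(1)}\bigr) = \Pr\bigl(\Delta^{(L,L+1)} = \mathbf{y}_i - \mathbf{y}_j\bigr),
\]
which depends on the past only through $X^{(L)}$.

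\textbf{Time-homogeneity and the matrix form.} Because the $w_{i(L+1)}$ are i.i.d.\ uniform on $\mathbb{F}(q)$ with a distribution that does not depend on $L$, the joint distribution of $\Delta^{(L,L+1)}$ is the same for every $L$. Thus setting
\[
[M]_{i,j} = \Pr\bigl(\Delta^{(L,L+1)} = \mathbf{y}_i - \mathbf{y}_j\bigr)
\]
yields a matrix independent of $L$, and the recursion $\mathbf{p}^{(L+1)} = \mathbf{M} \mathbf{p}^{(L)}$ follows from the law of total probability,
\[
p^{(L+1)}_i = \sum_{j} \Pr\bigl(X^{(L+1)} = \mathbf{y}_i \,\big|\, X^{(L)} = \mathbf{y}_j\bigr)\, p^{(L)}_j = \sum_{j} [M]_{i,j}\, p^{(L)}_j.
\]

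The proof is essentially a bookkeeping verification rather than a conceptually difficult argument; the only point that requires care is ensuring that the ordering of pairs $\{i,j\}$ used to form $\Delta^{(L,L+1)}$ is consistent with the ordering used to define $X^{(L)}$, so that the identity $X^{(L+1)} = X^{(L)} + \Delta^{(L,L+1)}$ holds coordinate-wise. The main conceptual content, which will be exploited later in the paper, is that the increment is independent of the past and additive in $\mathbb{F}(q)$, which simultaneously gives the Markov structure, the time-homogeneity, and the convolution form of the transition kernel that will ultimately drive the convergence to the uniform stationary distribution.
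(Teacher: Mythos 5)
Your proposal is correct and follows essentially the same route as the paper: the additive decomposition $X^{(L+1)} = X^{(L)} + \Delta^{(L,L+1)}$, independence of the increment from the past, and the law of total probability yielding $\mathbf{p}^{(L+1)} = \mathbf{M}\mathbf{p}^{(L)}$ with an $L$-independent kernel. If anything, you are slightly more careful than the paper in spelling out that the increment is independent of the entire history $(X^{(1)},\ldots,X^{(L)})$ rather than merely of $X^{(L)}$, which is what the Markov property actually requires.
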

	
	\begin{proof}
		
		Defining the  data files as above, then  we have,
		\begin{align}
		\langle W_i^{(L+1)},W_j^{(L+1)} \rangle= \langle W_i^{(L)},W_j^{(L)}\rangle + w_{i(L+1)}w_{j(L+1)}, \ \forall i, j \in [1:K].
		\end{align}
		Thus for the vector of inner products $X^{(L)}$, we also can write,
		\begin{align}
		X^{(L+1)} = X^{(L)} + \Delta^{(L,L+1)}  \ \ \
		\end{align}
		where
		\begin{align}
		\Delta^{(L,L+1)}   = (w_{1(L+1)}w_{1(L+1)}, w_{1(L+1)}w_{2(L+1)},...,w_{K(L+1)}w_{K(L+1)})^{\top} \in \mathbb{F}^{K(K+1)/2}(q) .
		\end{align}
		
		Because of the way we constructed $W_i^{(L+1)}$ from $W_i^{(L)}$, for $i=[1:K]$,   it is apparent that $ \Delta^{(L+1)}$ is independent of data files $W_i^{(L)}$, $i \in [1:K]$, and irrespective of $L$. We have
		\begin{align}
		\Pr\{X^{(L+1)} = \mathbf{y}_i\}=\sum_{j \in q^{K(K+1)/2}} \Pr\{X^{(L)} = \mathbf{y}_j\}.\Pr\{\Delta^{(L,L+1)} = \mathbf{y}_i - \mathbf{y}_j\}
		\end{align}	
		Thus from \eqref{piL}, we can rewrite the above equation as 		
		\begin{align}
		\mathbf{p}^{(L+1)} = \mathbf{M}\mathbf{p}^{(L)},
		\end{align}
		where $ \mathbf{M} \in \mathbb{R}^{q^{K(K+1)/2}\times q^{K(K+1)/2}}$ is a constant matrix, with entry $(i,j)$ be equal to
		\begin{align}
		[M]_{i,j} = \Pr\{ \Delta^{(L,L+1)} = \mathbf{y}_i - \mathbf{y}_j \}, \ \forall i,j \in [1:q^{K(K+1)/2}].
		\end{align}
		We note that $ \mathbf{M}$ is constant and independent of $L$.
	\end{proof}

	To show that the limit in \eqref{limUni} exists, in the following lemma, we guarantee that the Markov chain has steady distribution.
	
	\begin{lemma}
		\label{lem:positive}
		Markov chain formed by the sequence ${\{ X^{(L)} \}}_{L=1}^{\infty}$ is irreducible.
	\end{lemma}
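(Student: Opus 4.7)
The plan is to show that for any two states $\mathbf{y}_i, \mathbf{y}_j \in \mathbb{F}^{K(K+1)/2}(q)$, the difference $\mathbf{y}_i - \mathbf{y}_j$ can be written as a sum of finitely many vectors from the support of the increment distribution, namely $\mathcal{S} := \{(a_r a_s)_{1 \le r \le s \le K} : (a_1, \ldots, a_K) \in \mathbb{F}^K(q)\}$. Because each element of $\mathcal{S}$ is realized with probability at least $q^{-K}$, and because $\mathbf{0} \in \mathcal{S}$ (take $\mathbf{a} = \mathbf{0}$) so any representation can be extended to any longer length by padding with zero-steps, the existence of such an additive representation immediately gives $[M^n]_{i,j} > 0$ for some $n$, which is precisely what irreducibility requires.

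The main step is therefore to prove that the additive subgroup $G$ of $\mathbb{F}^{K(K+1)/2}(q)$ generated by $\mathcal{S}$ is the whole space. Taking $\mathbf{a} = c\mathbf{e}_r + d\mathbf{e}_s$ with $r < s$ produces the vector in $\mathcal{S}$ whose $\{r,r\}$, $\{r,s\}$, $\{s,s\}$ coordinates are $c^2$, $cd$, $d^2$ respectively, with all other coordinates zero. Subtracting the contributions of $\mathbf{a} = c\mathbf{e}_r$ and $\mathbf{a} = d\mathbf{e}_s$ (subtraction is legitimate in $G$ since it is a group; concretely $-x = (p-1)x$ where $p$ is the characteristic of $\mathbb{F}(q)$) isolates the vector with $cd$ in the single $\{r,s\}$ coordinate and zero elsewhere. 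Letting $c,d$ range over $\mathbb{F}(q)$ shows every off-diagonal basis direction can be scaled by any element of $\mathbb{F}(q)$. For the diagonal coordinates I would invoke the standard fact that every element of a finite field is a sum of at most two squares: in characteristic $2$ the Frobenius map is a bijection on $\mathbb{F}(q)$ so each element is itself a square, while in odd characteristic a pigeonhole count on the $(q+1)/2$ squares forces the sets $\{x^2\}$ and $\{c-x^2\}$ to intersect for every $c$. This places every scalar multiple of every diagonal basis vector into $G$ and completes the argument that $G = \mathbb{F}^{K(K+1)/2}(q)$.

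The main obstacle I anticipate is the rank-one structure: each element of $\mathcal{S}$ couples the diagonal and off-diagonal entries of a $K \times K$ symmetric matrix together through the factored form $\mathbf{a}\mathbf{a}^\top$, so isolating a single coordinate of $\mathbb{F}^{K(K+1)/2}(q)$ is not immediate from a single transition. The sum-of-two-squares observation is the key decoupling tool that lets the generation argument go through uniformly in $q$, and in particular covers characteristic $2$ where the usual ``polarization'' identity is unavailable.
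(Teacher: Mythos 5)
Your argument is correct. Reducing irreducibility to the claim that the additive subgroup $G$ generated by the support $\mathcal{S}=\{(a_ra_s)_{r\le s}:\mathbf{a}\in\mathbb{F}^K(q)\}$ is the whole space is legitimate: since $\mathbf{0}\in\mathcal{S}$ and $-x=(p-1)x$, every element of $G$ is a finite sum of positive-probability increments, and zero-padding even yields a single power of $\mathbf{M}$ with all entries positive (which the paper needs later for the steady-state corollary). The paper's proof has the same skeleton as yours --- decompose an arbitrary target into vectors supported on a single coordinate, treat the diagonal coordinate $\{r,r\}$ and the off-diagonal coordinate $\{r,s\}$ separately, and invoke the fact that every element of $\mathbb{F}(q)$ is a sum of two squares --- but it differs in how the off-diagonal coordinate is isolated. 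You use the polarization identity
\begin{align}
(c\mathbf{e}_r+d\mathbf{e}_s)(c\mathbf{e}_r+d\mathbf{e}_s)^\top-(c\mathbf{e}_r)(c\mathbf{e}_r)^\top-(d\mathbf{e}_s)(d\mathbf{e}_s)^\top,
\end{align}
with the subtractions realized group-theoretically as $(p-1)$-fold sums. The paper instead builds an explicit five-step forward realization: it places $a$ and $1$ in positions $i_e,j_e$ at one step to create the cross term, and then cancels the resulting diagonal residues $a^2$ and $1$ by writing $-a^2$ and $-1$ each as a sum of two squares spread over the remaining four steps. Your route is more conceptual and gives a uniform treatment of characteristic $2$ via the Frobenius bijection; the paper's route buys an explicit, characteristic-independent step count ($\Gamma=5K(K+1)/2$), whereas your count grows with the characteristic $p$ through the $-x=(p-1)x$ device. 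Both are complete proofs of the lemma.
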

	\begin{proof}
		In order to prove lemma we show that there exists some $\Gamma \in \mathbb{N}$, such that  $[\mathbf{M}^\Gamma]_{i,j} > 0$,  $\forall i,j \in [1:q^{K(K+1)/2}] $. This means it is possible to get to any state from any state in this chain or equivalently this chain is irreducible.
		We note that for any integer $\Gamma$
		\begin{align}
		X^{(L+\Gamma)} = X^{(L)} + \Delta^{(L,L+\Gamma)},
		\end{align}
		where
		\begin{align}
		\Delta^{(L,L+\Gamma)} = (\sum_{\gamma=1}^{\Gamma}w_{1(L+\gamma)}w_{1(L+\gamma)}, \sum_{\gamma=1}^{\Gamma}w_{1(L+\gamma)}w_{2(L+\gamma)},...,\sum_{\gamma=1}^{\Gamma}w_{K(L+\gamma)}w_{K(L+\gamma)})^{\top}.
		\end{align}
		One can see that
		\begin{align}
		\Pr\{ \Delta^{(L,L+\Gamma)} = \mathbf{y}_i - \mathbf{y}_j \} = [\mathbf{M}^\Gamma]_{i,j}, \ \forall i,j \in [1:K] \label{Midxprob},
		\end{align}
		$[\mathbf{M}^\Gamma]_{i,j}$ denotes entry $(i,j)$ of matrix $\mathbf{M}^\Gamma$.

		This lemma is equivalent to claim that there exists some $\Gamma \in \mathbb{N}$ such  that every realization of $\Delta^{(L,L+\Gamma)}$ in $\mathbb{F}^{K(K+1)/2}(q)$ is possible with some positive probability. Notice that the following relationship holds,
		\begin{align}
		\Delta^{(L,L+\Gamma)} = \sum_{\gamma=1}^{\Gamma}\Delta^{(L+\gamma-1,L+\gamma)}
		\end{align}
		It is obvious that $\Delta^{(L+\gamma-1,L+\gamma)}$, $\gamma=1,\ldots, \Gamma$,  are mutually independent.
		The reason is that $\Delta^{(L+\gamma-1,L+\gamma)}$ is only dependent of $w_{k(L+\gamma)}, k\in [1:K]$.
		
		We first show that for $\Gamma= 5$ every vector in $\mathbb{F}^{K(K+1)/2}(q)$ with only one non-zero element is a probable (has a  positive probability) realization of $\Delta^{(L,L+5)}$. In other words, we show that, for any	$\mathbf{y}\in\mathbb{F}^{K(K+1)/2}(q)$, where $\mathbf{y}(e) \neq 0$ and $\mathbf{y}(i) =0$, $\forall i  \in [1:q^{K(K+1)/2}]\backslash \{e \}$,  for some $e
		\in [1:q^{K(K+1)/2}]$,  then $\Pr\{\Delta^{(L,L+5)} = \mathbf{y}\}>0$. Let us assume
		\begin{align}
		\mathbf{y}(e) = a, \textrm{for some}  \ a \in \mathbb{F}(q) \backslash \{ 0 \}.
		\end{align}
		
		%
		%
		We know that by definition  $\Delta^{(L,L+5)} (e) = \sum_{\gamma=1}^{5}w_{i_e(L+\gamma)}.w_{j_e(L+\gamma)}$, for some $i_e,j_e \in [1:K]$. Here, we consider  two cases for values of $i_e,j_e$.
		\begin{itemize}
			\item[{\bf Case (I)}]: In this case $i_e=j_e$.  In other words,  $\Delta^{(L,L+5)}(e) = \sum_{\gamma=1}^{5}w_{i_e(L+\gamma)}.w_{i_e(L+\gamma)}$, for some $i_e \in [1:K]$
			
			From \cite[Page 66 ]{newman1972integral}, we have
			\begin{align}
			\forall a\in \mathbb{F}(q),\exists s,t \in \mathbb{F}(q) : a = s^2+t^2\label{sum2},
			\end{align}
			Therefore one possible case that can create such  $\mathbf{y}$ is as follows:
			\begin{align}
			w_{r(L+\gamma)} = \begin{cases}
			t & r=i_e=j_e,\gamma=1 \\
			s & r=i_e=j_e,\gamma=2 \\
			0 & o.w.
			\end{cases}
			\end{align}
			Clearly this case has positive probability and therefore 	$\Pr\{ \Delta^{(L,L+\Gamma)} = \mathbf{y} \} > 0$.

			%
			%

			\item[{\bf Case (II):}] In the case $i_e \neq j_e$.  In other words,  $\Delta^{(L,L+5)}(e) = \sum_{\gamma=1}^{5}w_{i_e(L+\gamma)}.w_{j_e(L+\gamma)}$, for some $i_e, j_e \in [1:K]$,  $i_e \neq j_e$.

			We have (see~\cite[Page 66]{newman1972integral})
			\begin{align}
			\exists s_1,s_2,t_1,t_2 \in \mathbb{F}_q  : -a^2 = s_1^2+t_1^2 \text{  and} -1 = s_2^2+t_2^2
			\end{align}
			Therefore one possible case that can create such  $\mathbf{y}$ is as follows:
			\begin{align}
			\label{caseII}
			w_{r(L+\gamma)} = \begin{cases}
			a & r=i_e,\gamma=1 \\
			1 & r=j_e,\gamma=1 \\
			s_1 & r=i_e,\gamma=2 \\
			t_1 & r=i_e,\gamma=3 \\
			s_2 & r=j_e,\gamma=4 \\
			t_2 & r=j_e,\gamma=5 \\
			0 & o.w.
			\end{cases}
			\end{align}
			In particular, one can verify that
			%
			\begin{align}
			\Delta^{(L,L+5)}(e)=\sum_{\gamma=1}^{\Gamma}w_{i_e(L+\gamma)}w_{j_e(L+\gamma)} = a\times 1 + s_1\times 0  + t_1\times 0 + 0\times s_2 + 0\times t_2 = a.
			\end{align}
			In addition
			\begin{align}
			\sum_{\gamma=1}^{\Gamma}w_{i_e(L+\gamma)}w_{i_e(L+\gamma)} = a^2 + s_1^2 + t_1^2 + 0 + 0 = a^2 - a^2 = 0,
			\end{align}
			\begin{align}
			\sum_{\gamma=1}^{\Gamma}w_{j_e(L+\gamma)}w_{j_e(L+\gamma)} = 1^2 + 0 + 0 + s_2^2 + t_2^2 = 1^2 - 1^2 = 0.
			\end{align}
			Other entries of $\Delta^{(L,L+5)}$ are zero trivially.
			
			Since the probability of \eqref{caseII} is not zero, therefore in this case also $\Pr\{ \Delta^{(L,L+\Gamma)} = \mathbf{y} \} > 0$.
			
			%
			%
			
		\end{itemize}
		
		From these two cases above,  we can say every vector with one non-zero element is a probable (with positive probability) realization of $\Delta^{(L,L+5)}$.
		We now show that every vector in $\mathbb{F}^{K(K+1)/2}(q)$ is a possible realization with positive probability for $\Delta^{(L,L+\Gamma)}$ when $\Gamma \geq 5K(K+1)/2$. First we write $\Delta^{(L,L+5K(K+1)/2)}$ as,
		\begin{align}
		\Delta^{(L,L+5K(K+1)/2)} = \sum_{\gamma=1}^{K(K+1)/2}\Delta^{(L+5(\gamma-1),L+5(\gamma))}\label{sumdelta}.
		\end{align}
		
		Let $\mathbf{y}\in \mathbb{F}^{K(K+1)/2}(q)$ be an arbitrary vector. To show that $\mathbf{y}$ is a possible realization of $\Delta^{(L,L+\Gamma)}$ with non-zero probability,  we first define $\mathbf{y}^{(i)},i\in[1:K(K+1)/2]$,  as follows,
		\begin{align}
		\mathbf{y}^{(i)}(i) & = \mathbf{y}(i),\\
		\mathbf{y}^{(i)}(j)  & = 0, \  j \in[1:K(K+1)/2] \backslash \{ i \}.
		\end{align}
		This means $\mathbf{y}^{(i)}$ is zero in every index except in index $i$ where its value is $\mathbf{y}(i)$. We can see that,
		\begin{align}
		\mathbf{y}=\sum_{i=1}^{K(K+1)/2}\mathbf{y}^{(i)}.\label{sumy}
		\end{align}
		
		By construction, $\mathbf{y}^{(i)}$ is a vector that has at most one non-zero element, thus it is a probable realization for $\Delta^{(L+5(i-1),L+5(i))}$. Now if $\Delta^{(L+5(i-1),L+5(i))}=\mathbf{y}^{(i)}$, ,$\forall i\in[1:K(K+1)/2]$ which is possible with positive probability then because of \eqref{sumy} and \eqref{sumdelta}, we know $\Delta^{(L,L+5K(K+1)/2)}=\mathbf{y}$, therefore $\Pr\{\Delta^{(L,L+5K(K+1)/2)}=\mathbf{y}\} > 0$. Also because of \eqref{Midxprob} every element in the matrix $\mathbf{M}^{5K(K+1)/2}$ is positive.
	\end{proof}

	\begin{corollary}
		Markov sequence ${\{ X^{(L)} \}}_{L=1}^{\infty}$  has a steady state.	
	\end{corollary}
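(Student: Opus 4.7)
The plan is to derive the corollary directly from Lemma~\ref{ismarkov} and Lemma~\ref{lem:positive} by invoking standard finite Markov chain theory. The state space $\mathbb{F}^{K(K+1)/2}(q)$ has only $q^{K(K+1)/2}$ elements, so the chain with transition matrix $\mathbf{M}$ is finite, time-homogeneous, and (by Lemma~\ref{lem:positive}) irreducible, since every entry of $\mathbf{M}^{\Gamma}$ is strictly positive for $\Gamma = 5K(K+1)/2$. In particular, $\mathbf{M}$ is a primitive column-stochastic matrix.

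Before appealing to Perron-Frobenius, I would first record aperiodicity. Taking $w_{r(L+1)} = 0$ for every $r\in[1:K]$, which happens with probability $q^{-K}>0$, forces $\Delta^{(L,L+1)} = \mathbf{0}$, so $[\mathbf{M}]_{i,i}>0$ for every state $i$. Every state therefore has a self-loop and the period of the chain is $1$. Combined with finiteness and irreducibility, this places us squarely in the setting of the Perron-Frobenius theorem for primitive stochastic matrices.

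I would then invoke that theorem to conclude that $\mathbf{M}$ admits $1$ as a simple eigenvalue, that the corresponding right eigenvector $\boldsymbol{\pi}$ can be normalized to a probability vector with strictly positive entries, and that every other eigenvalue satisfies $|\lambda_i|<1$. The vector $\boldsymbol{\pi}$ is the unique stationary distribution, which is exactly the ``steady state'' claimed by the corollary, and $\mathbf{p}^{(L)}\to \boldsymbol{\pi}$ as $L\to\infty$ for every initial distribution $\mathbf{p}^{(1)}$.

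No new estimates are needed for the corollary itself; the only conceptual step is noting aperiodicity, which is immediate from the nonzero probability of the all-zero increment. The only real obstacle is bookkeeping: making clear that the existence of $\boldsymbol{\pi}$ proved here is precisely what later supplies the second-largest eigenvalue $\lambda_2$ appearing in Theorem~\ref{mpl2}, so that the rate-of-convergence argument leading to \eqref{limUni} can be built on top of this corollary.
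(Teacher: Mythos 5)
Your proof is correct and follows essentially the same route as the paper, which simply cites the standard criterion that a finite time-homogeneous chain has a unique steady state once some power of $\mathbf{M}$ is entrywise positive, exactly as established in Lemma~\ref{lem:positive}. Your separate aperiodicity argument via the all-zero increment $\Delta^{(L,L+1)}=\mathbf{0}$ is valid but redundant, since the strict positivity of $\mathbf{M}^{5K(K+1)/2}$ already gives primitivity.
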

	
	\begin{proof}
		Markov sequence  ${\{ X^{(L)} \}}_{L=1}^{\infty}$ has a unique steady state if there exist an integer $\Gamma$ that $\mathbf{M}^{\Gamma}$ has an all positive row~~\cite[Page 176]{ash1990information}, as it is proved in Lemma~\ref{lem:positive}.
	\end{proof}		
	
\begin{lemma}\label{markovchain}
	As $L\rightarrow \infty$,  Markov chain  ${\{ X^{(L)} \}}_{L=1}^{\infty}$ converges to a random vector with uniform distribution over $\mathbb{F}^{K(K+1)/2}(q)$.
\end{lemma}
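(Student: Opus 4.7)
The plan is to identify the unique stationary distribution of the Markov chain $\{X^{(L)}\}_{L=1}^{\infty}$ as the uniform distribution on $\mathbb{F}^{K(K+1)/2}(q)$, and then appeal to the standard convergence theorem for primitive finite Markov chains (which the previous corollary has essentially already set up).

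The key observation I would use is that the transition matrix $\mathbf{M}$ is doubly stochastic. Recall $[M]_{i,j} = \Pr\{\Delta^{(L,L+1)} = \mathbf{y}_i - \mathbf{y}_j\}$. Column sums equal $1$ by standard stochasticity, since for any fixed $j$ the map $i \mapsto \mathbf{y}_i - \mathbf{y}_j$ is a bijection of $\mathbb{F}^{K(K+1)/2}(q)$ onto itself (exploiting the additive group structure of the finite field), so $\sum_i [M]_{i,j} = \sum_{\mathbf{y} \in \mathbb{F}^{K(K+1)/2}(q)} \Pr\{\Delta^{(L,L+1)} = \mathbf{y}\} = 1$. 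Row sums equal $1$ by the same argument applied to the map $j \mapsto \mathbf{y}_i - \mathbf{y}_j$ for fixed $i$. Hence $\mathbf{M}$ is doubly stochastic.

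Next I would verify directly that the uniform vector $\mathbf{u} \in \mathbb{R}^{q^{K(K+1)/2}}$ with all entries $1/q^{K(K+1)/2}$ satisfies $\mathbf{M}\mathbf{u} = \mathbf{u}$: indeed, $[\mathbf{M}\mathbf{u}]_i = \frac{1}{q^{K(K+1)/2}} \sum_j [M]_{i,j} = \frac{1}{q^{K(K+1)/2}}$ by the row-sum property just established. So $\mathbf{u}$ is a stationary distribution.

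Finally, to promote stationarity into convergence from the initial distribution $\mathbf{p}^{(1)}$, I would combine irreducibility (Lemma~\ref{lem:positive}) with aperiodicity; both follow from the stronger statement already proved, namely that $\mathbf{M}^{5K(K+1)/2}$ has strictly positive entries, which makes $\mathbf{M}$ a primitive stochastic matrix. The Perron--Frobenius theorem then yields a unique stationary distribution and convergence $\mathbf{p}^{(L)} \to \mathbf{u}$ as $L \to \infty$ regardless of $\mathbf{p}^{(1)}$; combined with the uniqueness shown in the corollary, this stationary limit must be $\mathbf{u}$. I do not expect any serious obstacle here; the main point to be careful about is that doubly-stochasticity uses the additive group structure of $\mathbb{F}^{K(K+1)/2}(q)$ (so that differences of enumerated elements re-enumerate the whole space), and the convergence rate comment foreshadowed by $\lambda_2$ in Theorem~\ref{mpl2} comes for free from the Perron--Frobenius spectral gap, which is an important bonus to record for use in the later converse.
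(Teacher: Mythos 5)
Your proposal is correct and follows essentially the same route as the paper: both arguments use the additive group structure of $\mathbb{F}^{K(K+1)/2}(q)$ to show that the row sums of $\mathbf{M}$ equal one (i.e., that $\mathbf{M}$ is doubly stochastic), hence that the uniform vector is a fixed point, and both then invoke the positivity of $\mathbf{M}^{5K(K+1)/2}$ from the preceding lemma to guarantee that this fixed point is the unique limit of $\mathbf{p}^{(L)}$. Your write-up is somewhat more explicit about aperiodicity and the Perron--Frobenius step, which the paper leaves to a citation, but the substance is identical.
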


	\begin{proof}
		
		It is known that if a Markov chain has
		steady state, its stationary distribution is equal to its steady
		state probabilities~~\cite[Page 174]{ash1990information}. We use this fact to find that steady state. As obtained, we know $[\mathbf{M}]_{i,j} = \Pr\{\Delta^ {(L,L+1)} = \mathbf{y}_j - \mathbf{y}_i\}$. It is easy to see that for any $i$,
		the set $\{ \mathbf{y}_i - \mathbf{y}_j, j \in [1: q^{K(K+1)/2}] \}$ is equal to $\mathbb{F}^{K(K+1)/2 }(q)$. Thus,
		\begin{align}
		\label{Mfull}
		\sum_{j=1}^{q^{K(K+1)/2}} [\mathbf{M}]_{i,j} = 1, \ \forall i \in [1:q^{K(K+1)/2}].
		\end{align}
		Let $\boldsymbol{\pi} = (1/q^{K(K+1)/2},...,1/q^{K(K+1)/2})^\top$.
		It is easy to see that due to~\eqref{Mfull}, $\mathbf{M} \boldsymbol{\pi} =\boldsymbol{\pi}$.
		%
		%
		Thus uniform distribution is stationary state probability of this Markov chain.
	\end{proof}

	\begin{lemma}\label{rate}
		Let $\mathbf{p}^{(L)}  \in [0,1]^{ q^{K(K+1)/2}}$ denote the PMF of $X^{(L)}$ over $\mathbb{F}^{K(K+1)/2 }(q)$. Then, $ \|\mathbf{p}^{(L)} -\boldsymbol{\pi} \|_{\infty} = O(\lambda_2^{L - 1})$, where $\lambda_2$ is the second largest eigenvalue (absolute value of eigenvalue) of $ \mathbf{M}$ and $|\lambda_2| <1$.
	\end{lemma}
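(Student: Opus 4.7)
The plan is to combine the primitivity of $\mathbf{M}$ established in Lemma~\ref{lem:positive} with the Perron--Frobenius theorem and a spectral decomposition. Since Lemma~\ref{lem:positive} shows that $\mathbf{M}^{5K(K+1)/2}$ has strictly positive entries, $\mathbf{M}$ is a primitive column-stochastic matrix. Perron--Frobenius then yields that $1$ is a simple eigenvalue of $\mathbf{M}$ with right eigenvector proportional to $\boldsymbol{\pi}$, and every other eigenvalue has modulus strictly less than $1$. Denote by $\lambda_2$ the eigenvalue of second-largest modulus, so $|\lambda_2|<1$.

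Next I would reduce the claim to a contraction estimate on an invariant subspace. Because $\mathbf{M}$ is column-stochastic (each column sums to $1$ by~\eqref{Mfull} applied to rows, which also holds for columns by the same symmetry argument), $\mathbf{1}^\top \mathbf{M}=\mathbf{1}^\top$, so the hyperplane $\mathcal{S}=\{\mathbf{v}:\mathbf{1}^\top\mathbf{v}=0\}$ is $\mathbf{M}$-invariant. Setting $\mathbf{e}^{(L)}:=\mathbf{p}^{(L)}-\boldsymbol{\pi}$, both $\mathbf{p}^{(L)}$ and $\boldsymbol{\pi}$ are probability vectors, hence $\mathbf{e}^{(L)}\in\mathcal{S}$; a one-line induction using Lemma~\ref{ismarkov} together with $\mathbf{M}\boldsymbol{\pi}=\boldsymbol{\pi}$ (from the proof of Lemma~\ref{markovchain}) then gives $\mathbf{e}^{(L)}=\mathbf{M}^{L-1}\mathbf{e}^{(1)}$, so the task reduces to bounding $\|\mathbf{M}^{L-1}\mathbf{e}^{(1)}\|_\infty$.

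I would then expand $\mathbf{e}^{(1)}$ in a (generalized) eigenbasis of the restriction $\mathbf{M}|_{\mathcal{S}}$. Since $1$ is a \emph{simple} eigenvalue with eigenvector $\boldsymbol{\pi}\notin\mathcal{S}$, the spectrum of $\mathbf{M}|_{\mathcal{S}}$ is exactly the remaining eigenvalues of $\mathbf{M}$, all of modulus at most $|\lambda_2|$. The Jordan form then gives an expression of the form $\mathbf{M}^{L-1}\mathbf{e}^{(1)}=\sum_i P_i(L)\,\lambda_i^{L-1}\mathbf{u}_i$, where each $P_i$ is a polynomial in $L$ of degree less than the size of the corresponding Jordan block and $\mathbf{u}_i$ is a generalized eigenvector. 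Taking $\ell_\infty$-norms, the dominant term is the one with $|\lambda_i|=|\lambda_2|$, giving the claimed $O(\lambda_2^{L-1})$ bound in the standard asymptotic sense that any polynomial-in-$L$ prefactor is absorbed because $|\lambda_2|<1$.

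The main obstacle is precisely that potential polynomial prefactor from a nontrivial Jordan block at $\lambda_2$: it does not literally give a constant times $\lambda_2^{L-1}$, but it is $O((|\lambda_2|+\varepsilon)^{L-1})$ for every $\varepsilon>0$ with $|\lambda_2|+\varepsilon<1$. A cleaner fix exploits the group structure behind $\mathbf{M}$: the entries $[\mathbf{M}]_{i,j}=\Pr\{\Delta^{(L,L+1)}=\mathbf{y}_i-\mathbf{y}_j\}$ depend only on the difference $\mathbf{y}_i-\mathbf{y}_j$ in the additive abelian group $\mathbb{F}^{K(K+1)/2}(q)$, so $\mathbf{M}$ is the convolution operator associated with the law of $\Delta^{(L,L+1)}$. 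Convolution operators on a finite abelian group are simultaneously diagonalized by the group characters (the discrete Fourier basis), so $\mathbf{M}$ is actually diagonalizable and its eigenvalues are the Fourier coefficients of the distribution of $\Delta^{(L,L+1)}$; this refinement yields a literal $C\,\lambda_2^{L-1}$ bound and even identifies $\lambda_2$ concretely.
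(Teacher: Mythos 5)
Your proposal is correct, and it follows the same overall spectral route as the paper: both arguments reduce the claim to an eigendecomposition of $\mathbf{M}$, identify $\boldsymbol{\pi}$ as the eigenvector of the simple top eigenvalue $\lambda_1=1$, and read off the $O(\lambda_2^{L-1})$ rate from the remaining spectrum. The substantive difference is how diagonalizability is handled. The paper simply asserts in \eqref{P1topi} that the right eigenvectors of $\mathbf{M}$ form an orthogonal basis of $\mathbb{R}^{q^{K(K+1)/2}}$ and expands $\mathbf{p}^{(1)}$ in that basis; for a general non-symmetric stochastic matrix this is unjustified (and, as you note, a nontrivial Jordan block at $\lambda_2$ would introduce a polynomial prefactor, weakening the bound to $O((|\lambda_2|+\varepsilon)^{L-1})$). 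Your closing observation closes exactly this gap: since $[\mathbf{M}]_{i,j}$ depends only on $\mathbf{y}_i-\mathbf{y}_j$, $\mathbf{M}$ is the convolution operator of the law of $\Delta^{(L,L+1)}$ on the additive group $\mathbb{F}^{K(K+1)/2}(q)$, hence is unitarily diagonalized by the group characters, its eigenvalues are the Fourier coefficients of that law, and the literal $C\,\lambda_2^{L-1}$ bound follows. This is a genuine strengthening of the paper's argument rather than a different proof; it also supplies the only rigorous justification available for the paper's orthogonal-basis claim, since $\mathbf{M}$ is not symmetric in general (e.g.\ the coordinates of $\Delta$ of the form $w_{i}^2$ are supported on squares, so the law of $\Delta$ need not be symmetric under negation). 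Your invariant-subspace reduction $\mathbf{e}^{(L)}=\mathbf{M}^{L-1}\mathbf{e}^{(1)}$ with $\mathbf{1}^{\top}\mathbf{e}^{(1)}=0$ is also cleaner than the paper's step of deducing $\alpha_1=1$ from the limit in \eqref{PpiTemp}. One cosmetic point: the paper's displayed recursion $\mathbf{p}^{(L)}=\mathbf{M}\mathbf{p}^{(1)}$ should read $\mathbf{p}^{(L)}=\mathbf{M}^{L-1}\mathbf{p}^{(1)}$, which is what you (correctly) use.
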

	
	We show that,
	\begin{align}
	\mathbf{p}^{(L)} + O(\lambda_2^{L-1})\mathbf{1} = \boldsymbol{\pi}
	\end{align}
	where $\boldsymbol{\pi} = (\dfrac{1}{q^{K(K+1)/2}},...,\dfrac{1}{q^{K(K+1)/2}})^\top$ and $\mathbf{1} = (1,...,1)^\top \in \mathbb{R}^{K(K+1)/2}$.
	
	From Lemma~\ref{ismarkov}, we know that $\{ X^{(L)} \}_{L=1}^{\infty}$ forms a Markov chain  with transition matrix $\mathbf{M}$.
	Thus, the PMF of
	$X^{(L)}$, denoted by $\mathbf{p}^{(L)}$, is equal to
	\begin{align}
	\mathbf{p}^{(L)} = \mathbf{M} \mathbf{p}^{(1)},
	\end{align}
	where $\mathbf{p}^{(1)}$ is the PMF of $X^{(1)}$.  Also from Lemma~\ref{markovchain}, we know that this Markov chain has a steady state. Thus for the eigenvalue of
	transition matrix $\mathbf{M}$, we have,
	\begin{align}
	\label{lambda}
	|\lambda_{q^{K(K+1)/2}}| \leq  ... \leq |\lambda_2| < |\lambda_1| = 1.
	\end{align}
	As stated, matrix $\mathbf{M}$ and thus its eigenvalues are independent of $L$.
	
	Let $\boldsymbol{\pi},  \mathbf{v}_2, \ldots,  \mathbf{v}_{q^{K(K+1)/2}} \in \mathbb{R}^{q^{K(K+1)/2}} $ denote the right eigenvectors corresponding to the eigenvalues $\lambda_1, \lambda_2, \ldots, \lambda_{q^{K(K+1)/2}}$, respectively.
	We note that $\boldsymbol{\pi},  \mathbf{v}_2, \ldots,  \mathbf{v}_{q^{K(K+1)/2}}$ forms an orthogonal bases for $\mathbb{R}^{q^{K(K+1)/2}}$. Thus, we can expand  $\mathbf{p}^{(1)}$ as
	\begin{align}
	\mathbf{p}^{(1)}  = \alpha_1 \boldsymbol{\pi}  + \alpha_2 \mathbf{v}_2 +...+  \alpha_{q^{K(K+1)/2}} \mathbf{v}_{q^{K(K+1)/2}}\label{P1topi},
	\end{align}
	for some $\alpha_i$, $i \in [1: {q^{K(K+1)/2}} ]$.

	Therefore we can write,
	\begin{align}
	\mathbf{p}^{(L)} = \mathbf{M} \mathbf{p}^{(1)} = \lambda_1^{L-1} \alpha_1 \boldsymbol{\pi}  + \lambda_2^{L-1} \alpha_2 \mathbf{v}_2 +...+  \lambda_{q^{K(K+1)/2}}^{L-1} \alpha_{q^{K(K+1)/2}} \mathbf{v}_{q^{K(K+1)/2}} \label{PpiTemp}.
	\end{align}
	 From the fact that $\lim_{L \rightarrow \infty} \mathbf{p}^{(L)} = \boldsymbol{\pi}$ and also \eqref{lambda} we know when $L\to\infty$ every term in \eqref{PpiTemp} diminishes except $\lambda_1^{L-1}\alpha_1\boldsymbol{\pi}$ (where $\lambda_1 = 1$) which should be equal to $\boldsymbol{\pi}$. Thus we can rewrite \eqref{PpiTemp} as,
	 \begin{align}
	 \mathbf{p}^{(L)} =  \boldsymbol{\pi}  + \lambda_2^{L-1} \alpha_2 \mathbf{v}_2 +...+  \lambda_{q^{K(K+1)/2}}^{L-1} \alpha_{q^{K(K+1)/2}} \mathbf{v}_{q^{K(K+1)/2}} \label{Ppi}.
	 \end{align}
	 
	 Thus for every element of $\mathbf{p}^{(L)}$ from \eqref{Ppi}, we have,
	\begin{align}
	\mathbf{p}^{(L)}(i) &= \boldsymbol{\pi}(i) + \lambda_2^{L-1}\alpha_2 \mathbf{v}_2(i) +...+  \lambda_{q^{K(K+1)/2}}^{L-1}\alpha_{q^{K(K+1)/2}} \mathbf{v}_{q^{K(K+1)/2}}(i) \\
	&\leq \boldsymbol{\pi}(i) + |\lambda_2^{L-1}|(\sum_{t=2}^{q^{K(K+1)/2}}|\alpha_t \mathbf{v}_t(i)|) = \boldsymbol{\pi}(i) + O(\lambda_2^{L-1}).
	\end{align}

	\begin{lemma}\label{entropy}
		Entropy of set $\mathcal{X}^{(L)}_\mathcal{P}$ of inner products when $|\mathcal{P}| = P$  satisfies,
		\begin{align}
		H(\mathcal{X}^{(L)}_\mathcal{P}) \geq P\log(q) - O(\lambda_2^{L-1})
		\end{align}
		
	\end{lemma}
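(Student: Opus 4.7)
The plan is to lift the entrywise convergence of the joint PMF $\mathbf{p}^{(L)}$ established in Lemma~\ref{rate} to the marginal PMF of the ordered vector $X^{(L)}_\mathcal{P}$ (which has the same entropy as the set $\mathcal{X}^{(L)}_\mathcal{P}$), and then use continuity of entropy to convert closeness-to-uniform into the desired lower bound on $H(\mathcal{X}^{(L)}_\mathcal{P})$.

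First, I would express the PMF of $X^{(L)}_\mathcal{P}$ as a marginalization of $\mathbf{p}^{(L)}$: for each $\mathbf{y}\in\mathbb{F}^P(q)$, $\Pr\{X^{(L)}_\mathcal{P}=\mathbf{y}\}$ is the sum of the $q^{K(K+1)/2-P}$ entries of $\mathbf{p}^{(L)}$ whose $\mathcal{P}$-coordinates equal $\mathbf{y}$. Applying Lemma~\ref{rate} to each summand gives $\Pr\{X^{(L)}_\mathcal{P}=\mathbf{y}\}=\tfrac{1}{q^P}+\eta_\mathbf{y}$, where $|\eta_\mathbf{y}|\le q^{K(K+1)/2-P}\cdot O(\lambda_2^{L-1})=O(\lambda_2^{L-1})$ since the prefactor is independent of $L$. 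Moreover $\sum_\mathbf{y}\eta_\mathbf{y}=0$ because both the perturbed and uniform PMFs are normalized.

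Next, I would convert this $L^\infty$-closeness into an entropy bound via a second-order Taylor expansion of $f(x)=-x\log x$ about $u=1/q^P$. The constant terms sum to exactly $P\log q$; the linear term contributes $(-\log u-1)\sum_\mathbf{y}\eta_\mathbf{y}=0$; and the quadratic remainder equals $-\tfrac{1}{2}\sum_\mathbf{y}\eta_\mathbf{y}^2/\xi_\mathbf{y}$ for some $\xi_\mathbf{y}$ between $u$ and $u+\eta_\mathbf{y}$. For $L$ large enough, $\xi_\mathbf{y}\ge u/2>0$, so this remainder has magnitude at most $q^P\cdot q^P\cdot O(\lambda_2^{2(L-1)})=O(\lambda_2^{2(L-1)})$. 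Since $|\lambda_2|<1$ implies $\lambda_2^{2(L-1)}\le|\lambda_2|^{L-1}$, we obtain $H(\mathcal{X}^{(L)}_\mathcal{P})\ge P\log q-O(\lambda_2^{L-1})$, as claimed.

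The main obstacle is essentially bookkeeping: verifying that the prefactors $q^{K(K+1)/2-P}$ (arising in the marginalization step) and $q^{2P}$ (arising in the Taylor remainder) are $L$-independent constants and can therefore be absorbed into the implicit constants hidden by the $O(\cdot)$ notation. A secondary subtlety is the positivity of $u+\eta_\mathbf{y}$ needed to Taylor-expand the logarithm; this holds automatically for all sufficiently large $L$ because $u=1/q^P$ is a fixed positive constant while $|\eta_\mathbf{y}|\to 0$ at exponential rate, and the finitely many small-$L$ cases are trivially absorbed by enlarging the hidden constant.
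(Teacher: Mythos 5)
Your proposal is correct and follows essentially the same route as the paper: marginalize the joint PMF of $X^{(L)}$ over the coordinates outside $\mathcal{P}$ using the $L^\infty$ bound of Lemma~\ref{rate}, then convert closeness-to-uniform of the resulting $P$-dimensional PMF into the entropy lower bound. Your second-order Taylor expansion with the observation that $\sum_{\mathbf{y}}\eta_{\mathbf{y}}=0$ kills the linear term is a slightly more careful version of the paper's first-order absorption into the $O(\cdot)$, and in fact yields the marginally stronger remainder $O(\lambda_2^{2(L-1)})$.
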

	\begin{proof}
		This lemma gives a lower bound on entropy of inner products indexed by a subset $\mathcal{P}$. We first calculate probability distribution of $X^{(L)}_{\mathcal{P}}$ over $\mathbb{F}^P(q)$. For the set
		$\mathcal{P}$, we define the vector of indices $\boldsymbol{\tau}_\mathcal{P}$, such that
		$X^{(L)}_{\mathcal{P}}(i)=X^{(L)}( \tau_\mathcal{P}(i))$, for $i=1, \ldots, P$.  In addition, for a $\mathbf{z} \in \mathbb{F}^P(q)$, we define
		\begin{align}
		\mathcal{S}_{\mathbf{z}} \triangleq \left\{\mathbf{x}|\mathbf{x}\in\mathbb{F}^{K(K+1)/2}(q) , \mathbf{x}(\tau_{\mathcal{P}}(i)) =  \mathbf{z}(i), \forall i \in [1:P] \right\}.
		\end{align}
		It is easy to see that $|\mathcal{S}_{\mathbf{z}}| = q^{K(K+1)/2 - P}$.
		Now we can calculate the probability distribution of $X^{(L)}_{\mathcal{P}}$ over $\mathbb{F}^P(q)$.
		\begin{align}
		\Pr\{X^{(L)}_{\mathcal{P}} = \mathbf{z}\} &= \sum_{ \mathbf{x}\in \mathcal{S}_{\mathbf{z}}  }\Pr\{X^{(L)} = \mathbf{x}\} \\&= \sum_{ \mathbf{x}\in \mathcal{S}_{\mathbf{z}}  }  \dfrac{1}{q^{K(K+1)/2}} + O(\lambda_2^{L-1})\label{ref1setP} \\&= \dfrac{q^{K(K+1)/2 - P}}{q^{K(K+1)/2}}+ O(\lambda_2^{L-1}) = \dfrac{1}{q^P} + O(\lambda_2^{L-1}),
		\end{align}
		where
		(\ref{ref1setP}) is the result of Lemma \ref{rate}.
		
		For entropy of inner products in set $\mathcal{X}^{(L)}_{\mathcal{P}}$, we can write,		
		\begin{align}
		H(\mathcal{X}^{(L)}_{\mathcal{P}}) = H(X^{(L)}_{\mathcal{P}}) &=-\sum_{\mathbf{z}\in \mathbb{F}^P(q)}\Pr\{X^{(L)}_{\mathcal{P}} = \mathbf{z}\} \log{(\Pr\{X^{(L)}_{\mathcal{P}} = \mathbf{z}\})} \\
		&\geq -\sum_{\mathbf{z}\in \mathbb{F}^P(q)}\left(\dfrac{1}{q^P} + O(\lambda_2^{L-1})\right)\left(\log{\dfrac{1}{q^P} + O(\lambda_2^{L-1})}\right) \\
		&= -\sum_{\mathbf{z}\in \mathbb{F}^P(q)}\dfrac{1}{q^P} \log{\dfrac{1}{q^P}} - O(\lambda_2^{L-1}) \\&= P \log(q) - O(\lambda_2^{L-1})
		\end{align}
	\end{proof}
	
	Now that we have the result of lemma \ref{entropy} and by employing MPIR achievable scheme on inner products we are able to achieve result of theorem \ref{mpl2}.
	
	\subsection{Proof of Theorem \ref{mpl2}}
	This proof is similar to the proof in \cite{banawan2018multi} changed to match the current problem setting. To show the limits of capacity, this proof needs to be split in two parts, achievability and converse.
	
\emph{Achievability:} We use the same achievability scheme that is used in MPIR problem. In this case as the problem setting states, user wants to privately retrieve all inner products included in $\mathcal{X}^{(L)}_\mathcal{P}$ without revealing the identity of $\mathcal{P}$. Now we treat every inner product like an entry data file in MPIR and run the proposed scheme on them. Notice that by running MPIR scheme, inner products indexed by the mentioned subset are privately retrieved and to servers all subsets of size $P$ are equiprobable so privacy constraint is met and we can say,
	\begin{align}
	C \geq \underline{R}_{\mathsf{MPIR}}(K(K+1)/2,P,N) \Rightarrow \dfrac{1}{C} \leq \dfrac{1}{\underline{R}_{\mathsf{MPIR}}(K(K+1)/2,P,N)}\label{ach}.
	\end{align}
	Here the number of data files in MPIR is the number of inner products, i.e. $K(K+1)/2$.

\emph{Converse:} In order to prove the converse for Theorem \ref{mpl2} (i.e., to derive an upper bound on the capacity), we use Lemma \ref{entropy} from which we have a lower bound on the entropy of an arbitrary subset of inner products $\mathcal{P}$ as,
\begin{align}\label{eqn:con_ent1}
H(\mathcal{X}_{\mathcal{P}}) \geq P \log(q) - O(\lambda_2^{L-1}).
\end{align}

To continue with the proof, we consider the problem in two cases, $\dfrac{K(K+1)}{2P} \leq 2$ and $\dfrac{K(K+1)}{2P} > 2$.\\
\textbf{Case 1}: $\dfrac{K(K+1)}{2P} \leq 2$.

Remember that the user sends queries $Q_n^{[\mathcal{P}]}, n \in [1:N]$, to servers and receives answers $A_n^{[\mathcal{P}]},n\in[1:N]$. We define the set of all possible queries as
\begin{align}
	\mathcal{Q} \triangleq \{Q_n^{[\mathcal{P}]}| \mathcal{P}\subseteq\mathcal{T},\ n \in [1:N]\},
\end{align}
and also the set of answers from servers $n_1$ to $n_2$ as,
\begin{align}
	A_{n_1:n_2} \triangleq \{A_{n_1}^{[\mathcal{P}]},A_{n_1+1}^{[\mathcal{P}]},...,A_{n_2}^{[\mathcal{P}]}\}.
\end{align}
Note that the number of all possible $\mathcal{P}$ is $\beta = \binom{K(K+1)/2}{P}$.

We can assume symmetry in the scheme across data files and servers queries and answers. Even if the scheme is asymmetric we can replicate scheme for every permutation of data bases and servers and create a symmetric scheme.

Since the queries and answers are independent of the desired set because of \eqref{privacy}, we fix the answers of server 1 to be (same as the MPIR),
\begin{align}\label{answersym}
	A_1^{[\mathcal{P}]} = A_1, \forall \mathcal{P}\subseteq\mathcal{T},|\mathcal{P}|=P.
\end{align}

The proof of the following lemma is similar to the proof of \cite[Lemma~1]{banawan2018multi}.
\begin{lemma}\label{symL}
For any $\mathcal{S}\subseteq\mathcal{T}$ and $\mathcal{X}^{(L)}_{\mathcal{S}}=\{\langle W_i,W_j\rangle|\{i,j\}\in\mathcal{S}\}$, we have,
\begin{align}
	H(A_n^{[\mathcal{P}]}|\mathcal{X}^{(L)}_{\mathcal{S}}, \mathcal{Q}) & = H(A_1^{[\mathcal{P}]}|\mathcal{X}^{(L)}_{\mathcal{S}}, \mathcal{Q}),\\
	H(A_1|\mathcal{Q}) & = H(A_n^{[\mathcal{P}]}|\mathcal{Q})\label{priRes}.
\end{align}
\end{lemma}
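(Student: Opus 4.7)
The plan is to follow the symmetrization-plus-privacy structure of \cite[Lemma~1]{banawan2018multi}, replacing raw files by inner products wherever needed. The goal is to show that the per-server conditional entropies do not depend on which server is queried, which is exactly what a standard server-index symmetrization buys us.

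First I would invoke the standard symmetrization reduction: for any achievable scheme with queries $\{Q_n^{[\mathcal{P}]}\}$ and answers $\{A_n^{[\mathcal{P}]}\}$, one may construct a new scheme of the same rate, still satisfying \eqref{corC} and \eqref{privacy}, whose joint distribution of $(W_1,\ldots,W_K,\mathcal{Q}, A_n^{[\mathcal{P}]})$ is invariant under any relabeling of the server index $n$. The construction appends a uniformly random permutation $\sigma$ of $[1{:}N]$ to the user's randomness and routes to server $\sigma(n)$ wherever the original scheme used server $n$; this preserves correctness and privacy and yields full permutation symmetry across server labels. Under this assumption, the conditional distribution of $A_n^{[\mathcal{P}]}$ given any function $F(W_1,\ldots,W_K)$ and given $\mathcal{Q}$ does not depend on $n$. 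Taking $F$ to be the deterministic function $\mathcal{X}^{(L)}_{\mathcal{S}}$ of the files yields the first identity $H(A_n^{[\mathcal{P}]}\mid\mathcal{X}^{(L)}_{\mathcal{S}},\mathcal{Q}) = H(A_1^{[\mathcal{P}]}\mid\mathcal{X}^{(L)}_{\mathcal{S}},\mathcal{Q})$.

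For the second identity I would specialize the first to $\mathcal{S}=\emptyset$, obtaining $H(A_1^{[\mathcal{P}]}\mid\mathcal{Q}) = H(A_n^{[\mathcal{P}]}\mid\mathcal{Q})$, and then invoke \eqref{answersym} to replace $A_1^{[\mathcal{P}]}$ by $A_1$. The validity of \eqref{answersym} itself is a consequence of the privacy condition \eqref{privacy}, which forces $(Q_1^{[\mathcal{P}]}, A_1^{[\mathcal{P}]})$ to have the same marginal law for every $\mathcal{P}$ with $|\mathcal{P}|=P$, so one may couple these variables across $\mathcal{P}$ and denote them by a single pair $(Q_1,A_1)$ without loss of generality. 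The main obstacle I expect is purely bookkeeping: one must verify that the symmetrization does not alter $\mathcal{Q}$ in a way that would break the conditioning. Since $\mathcal{Q}$ is defined as the set $\{Q_n^{[\mathcal{P}]} : n\in[1{:}N],\ \mathcal{P}\subseteq\mathcal{T}\}$, permuting server labels merely permutes its elements as a set, leaving the conditioning event unchanged. Apart from this verification, both identities follow directly; the substitution of files by inner products plays no role in the symmetrization step and enters only through the object $\mathcal{X}^{(L)}_{\mathcal{S}}$ on which we condition.
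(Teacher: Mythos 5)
Your proposal is correct and takes essentially the same route as the paper, which itself only sketches the argument by deferring to the symmetrization in the MPIR lemma: symmetrize the scheme over server labels (the paper replicates the scheme over all permutations rather than drawing a uniformly random one, an immaterial difference for the entropy identities), then use the privacy condition \eqref{privacy} to identify $A_1^{[\mathcal{P}]}$ with a $\mathcal{P}$-independent $A_1$ as in \eqref{answersym}. Your added bookkeeping about $\mathcal{Q}$ being invariant as a set under relabeling is a detail the paper leaves implicit, but it does not change the approach.
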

\eqref{priRes} is the result of symmetry assumption and \eqref{privacy}.

First, we derive a lower bound on the entropy of inner products appeared in $A_1^{[\mathcal{P}]}$, that are not in $\mathcal{X}^{(L)}_\mathcal{P}$.
\begin{lemma}\label{MPIR_lemma_rem_info}
For the problem stated in Section~\ref{sec:problem}, with $P\geq\dfrac{K(K+1)}{2}$, the following lower bound on the conditional entropy of $A_1^{[\mathcal{P}]}$ holds.
	\begin{align}
	H(A_1^{[\mathcal{P}]}|\mathcal{X}^{(L)}_{\mathcal{P}}, \mathcal{Q}) > \dfrac{\dfrac{K(K+1)}{2}- P}{N}\log(q) - O(\lambda_2^{L-1}).
	\end{align}
\end{lemma}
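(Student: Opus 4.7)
The plan is to follow the Banawan--Ulukus converse template for MPIR, substituting the \emph{almost}-uniform distribution of $\mathcal{X}^{(L)}$ provided by Lemma~\ref{entropy} wherever the classical argument would invoke exact file-entropy equalities. First, I would combine the server symmetry of Lemma~\ref{symL} with sub-additivity of entropy to convert the single-server conditional entropy into a joint one:
\begin{align*}
N\cdot H(A_1^{[\mathcal{P}]}\mid \mathcal{X}^{(L)}_{\mathcal{P}},\mathcal{Q})
= \sum_{n=1}^{N}H(A_n^{[\mathcal{P}]}\mid \mathcal{X}^{(L)}_{\mathcal{P}},\mathcal{Q})
\geq H(A_{1:N}^{[\mathcal{P}]}\mid \mathcal{X}^{(L)}_{\mathcal{P}},\mathcal{Q}).
\end{align*}

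Second, I would exploit the Case~1 hypothesis $K(K+1)/(2P)\leq 2$ to pick an auxiliary subset $\mathcal{P}'\subseteq\mathcal{T}$ with $|\mathcal{P}'|=P$ and $\mathcal{P}\cup\mathcal{P}'=\mathcal{T}$; such a $\mathcal{P}'$ exists because $|\mathcal{T}\setminus\mathcal{P}|=K(K+1)/2-P\leq P$. Using correctness \eqref{corC} for the hypothetical request $\mathcal{P}'$ together with the privacy condition \eqref{privacy}, which equates any information-theoretic functional of $(Q_n^{[\mathcal{P}]},A_n^{[\mathcal{P}]},W_{1:K})$ with its counterpart for $(Q_n^{[\mathcal{P}']},A_n^{[\mathcal{P}']},W_{1:K})$, I would transport the reconstructability of $\mathcal{X}^{(L)}_{\mathcal{P}'}$ onto the actual answers, yielding
\begin{align*}
H(A_{1:N}^{[\mathcal{P}]}\mid \mathcal{X}^{(L)}_{\mathcal{P}},\mathcal{Q})
\geq H\!\left(\mathcal{X}^{(L)}_{\mathcal{P}'\setminus\mathcal{P}}\,\Big|\,\mathcal{X}^{(L)}_{\mathcal{P}},\mathcal{Q}\right).
\end{align*}
Since the queries are independent of the data files by \eqref{eqn:indQ}, the $\mathcal{Q}$ conditioning can be dropped, and the chain rule, combined with $\mathcal{P}\cup\mathcal{P}'=\mathcal{T}$, gives
\begin{align*}
H\!\left(\mathcal{X}^{(L)}_{\mathcal{P}'\setminus\mathcal{P}}\,\Big|\,\mathcal{X}^{(L)}_{\mathcal{P}}\right)
= H(\mathcal{X}^{(L)})-H(\mathcal{X}^{(L)}_{\mathcal{P}}).
\end{align*}
Applying Lemma~\ref{entropy} with the full index set of all $K(K+1)/2$ inner products to lower-bound the first term by $\tfrac{K(K+1)}{2}\log q - O(\lambda_2^{L-1})$, and using the trivial upper bound $H(\mathcal{X}^{(L)}_{\mathcal{P}})\leq P\log q$ on the second, delivers the claim after dividing by $N$.

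The main obstacle I expect is the second step, where correctness for the hypothetical set $\mathcal{P}'$ has to be transferred onto the actual answer vector $A_{1:N}^{[\mathcal{P}]}$. The privacy condition \eqref{privacy} is stated per server, so upgrading it to a joint statement across all $N$ servers requires the WLOG symmetrisation introduced right after \eqref{answersym} together with the fact that the user's randomness is shared across the queries sent to the different servers---exactly the mechanism used in \cite[Lemma~1]{banawan2018multi}. This is also where the $O(\lambda_2^{L-1})$ slack enters: because Lemma~\ref{entropy} is only an asymptotic statement, the counterparts of the crisp equalities in the classical MPIR converse become approximate, and the only nontrivial bookkeeping is verifying that the error terms propagate additively through the chain-rule decomposition without blowing up.
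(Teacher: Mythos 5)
Your overall skeleton matches the paper's: choose $\bar{\mathcal{P}}$ of size $P$ with $\mathcal{P}\cup\bar{\mathcal{P}}=\mathcal{T}$ (possible because $K(K+1)/2-P\le P$), reduce the target to $H(\mathcal{X}^{(L)})-H(\mathcal{X}^{(L)}_{\mathcal{P}})$ via the chain rule and independence of $\mathcal{Q}$ from the files, and finish with Lemma~\ref{entropy} plus the trivial bound $H(\mathcal{X}^{(L)}_{\mathcal{P}})\le P\log q$. But the middle step, which you yourself flag as the obstacle, is where the argument breaks, and the fix you propose does not work. You first write $N H(A_1^{[\mathcal{P}]}\mid\mathcal{X}^{(L)}_{\mathcal{P}},\mathcal{Q})\ge H(A_{1:N}^{[\mathcal{P}]}\mid\mathcal{X}^{(L)}_{\mathcal{P}},\mathcal{Q})$ and then try to lower-bound the joint entropy of the \emph{actual} answer vector $A_{1:N}^{[\mathcal{P}]}$ by $H(\mathcal{X}^{(L)}_{\mathcal{P}'\setminus\mathcal{P}}\mid\mathcal{X}^{(L)}_{\mathcal{P}},\mathcal{Q})$. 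That inequality requires $\mathcal{X}^{(L)}_{\mathcal{P}'}$ to be decodable from $(A_{1:N}^{[\mathcal{P}]},\mathcal{Q})$, which neither correctness nor privacy gives you. The privacy condition \eqref{privacy} is a per-server marginal statement; the \emph{joint} law of $(Q_{1:N}^{[\mathcal{P}]},A_{1:N}^{[\mathcal{P}]},W_{1:K})$ genuinely depends on $\mathcal{P}$ --- that is precisely how the user decodes different things for different requests --- so no symmetrisation or shared-randomness argument can upgrade \eqref{privacy} to a joint statement across servers. The symmetrisation in \eqref{answersym} fixes only server 1's answer to be request-independent; the remaining $N-1$ answers for request $\mathcal{P}$ and for request $\mathcal{P}'$ are different random variables.

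The repair is to reverse the order of your two operations. Apply the per-server symmetry of Lemma~\ref{symL} \emph{term by term} to the sum, so that $\sum_{n=1}^{N}H(A_n^{[\mathcal{P}]}\mid\mathcal{X}^{(L)}_{\mathcal{P}},\mathcal{Q})=\sum_{n=1}^{N}H(A_n^{[\bar{\mathcal{P}}]}\mid\mathcal{X}^{(L)}_{\mathcal{P}},\mathcal{Q})$, and only then use subadditivity to drop to $H(A_{1:N}^{[\bar{\mathcal{P}}]}\mid\mathcal{X}^{(L)}_{\mathcal{P}},\mathcal{Q})$. Now the answer vector in hand is the one generated by the request $\bar{\mathcal{P}}$, so correctness \eqref{corC} applied to $\bar{\mathcal{P}}$ legitimately yields $H(A_{1:N}^{[\bar{\mathcal{P}}]}\mid\mathcal{X}^{(L)}_{\mathcal{P}},\mathcal{Q})\ge I(\mathcal{X}^{(L)}\setminus\mathcal{X}^{(L)}_{\mathcal{P}};A_{1:N}^{[\bar{\mathcal{P}}]}\mid\mathcal{X}^{(L)}_{\mathcal{P}},\mathcal{Q})=H(\mathcal{X}^{(L)}\setminus\mathcal{X}^{(L)}_{\mathcal{P}}\mid\mathcal{X}^{(L)}_{\mathcal{P}},\mathcal{Q})$, after which your final computation (including the additive propagation of the $O(\lambda_2^{L-1})$ errors) goes through. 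This reordered chain is exactly the paper's proof.
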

\begin{proof}
Here we define $\bar{\mathcal{P}}\subseteq\mathcal{T}$ to be the set with size $|\bar{\mathcal{P}}|=P$ such that $\mathcal{P}\cup\bar{\mathcal{P}}=\mathcal{T}$.

Such a set $\bar{\mathcal{P}}$ exists, since $\mathcal{P}$ has more than the half of all inner products (due to $P \geq K(K+1)/4$). Now, we use \eqref{eqn:con_ent1} for $\mathcal{X}^{(L)}$ to write the following.
\begin{align}
\dfrac{K(K+1)}{2} \log(q) - P\log(q) - O(\lambda_2^{L-1})
&\leq   H(\mathcal{X}^{(L)}) - H(\mathcal{X}^{(L)}_\mathcal{P})\\
&= H(\mathcal{X}^{(L)}\setminus\mathcal{X}^{(L)}_\mathcal{P}|\mathcal{X}^{(L)}_\mathcal{P},\mathcal{Q})\\
&= H(\mathcal{X}^{(L)}\setminus\mathcal{X}^{(L)}_\mathcal{P}|\mathcal{X}^{(L)}_\mathcal{P},\mathcal{Q}) - H(\mathcal{X}^{(L)}\setminus\mathcal{X}^{(L)}_\mathcal{P}|A_{1:N}^{[\bar{\mathcal{P}}]},\mathcal{X}^{(L)}_\mathcal{P},\mathcal{Q}) \label{ref1remfo}\\
&= I(\mathcal{X}^{(L)}\setminus\mathcal{X}^{(L)}_\mathcal{P};A_{1:N}^{[\bar{\mathcal{P}}]}|\mathcal{X}^{(L)}_\mathcal{P},\mathcal{Q})\\
&= H(A_{1:N}^{[\bar{\mathcal{P}}]}|\mathcal{X}^{(L)}_\mathcal{P},\mathcal{Q})\label{ref2remfo}\\
&\leq \sum_{n=1}^{N}H(A_n^{[\bar{\mathcal{P}}]}|\mathcal{X}^{(L)}_\mathcal{P},\mathcal{Q})\\
&=NH(A_1|\mathcal{X}^{(L)}_\mathcal{P},\mathcal{Q}),\label{ref3remfo}
\end{align}
where \eqref{ref1remfo} follows from \eqref{corC} noting that $\mathcal{X}^{(L)}\setminus \mathcal{X}^{(L)}_\mathcal{P} \subseteq \mathcal{X}^{(L)}_{\bar{\mathcal{P}}}$; 
\eqref{ref2remfo} is true because of \eqref{reliC}; and \eqref{ref3remfo} is the result of Lemma~\ref{symL}.
\end{proof}

Now, we proceed to the converse following the approach of \cite{banawan2018multi} as,
\begin{align}
	\dfrac{K(K+1)}{2}\log(q)  - O(\lambda_2^{L-1})&\leq H(\mathcal{X}^{(L)})\\
	&= H(\mathcal{X}^{(L)}|\mathcal{Q})\label{ref4main1}\\
	&= H(\mathcal{X}^{(L)}|\mathcal{Q}) - H(\mathcal{X}^{(L)}|A_{1:N}^{[\mathcal{P}_1]},...,A_{1:N}^{[\mathcal{P}_\beta]},\mathcal{Q})\label{ref1main1}\\
	&= I(\mathcal{X}^{(L)};A_{1:N}^{[\mathcal{P}_1]},...,A_{1:N}^{[\mathcal{P}_\beta]}|\mathcal{Q})\\
	&= H(A_{1:N}^{[\mathcal{P}_1]},...,A_{1:N}^{[\mathcal{P}_\beta]}|\mathcal{Q})\label{ref2main1}\\
	&= H(A_1,A_{2:N}^{[\mathcal{P}_1]},...,A_{2:N}^{[\mathcal{P}_\beta]}|\mathcal{Q})\label{ref3main1}\\
	&= H(A_1,A_{2:N}^{[\mathcal{P}_1]}|\mathcal{Q}) + H(A_{2:N}^{[\mathcal{P}_2]},...,A_{2:N}^{[\mathcal{P}_\beta]}|A_1,A_{2:N}^{[\mathcal{P}_1]},\mathcal{Q})\\
	&= H(A_1,A_{2:N}^{[\mathcal{P}_1]}|\mathcal{Q}) + H(A_{2:N}^{[\mathcal{P}_2]},...,A_{2:N}^{[\mathcal{P}_\beta]}|A_1,A_{2:N}^{[\mathcal{P}_1]},\mathcal{X}^{(L)}_{\mathcal{P}_1},\mathcal{Q})\label{ref5main1}\\
	&\leq \sum_{n = 1}^{N}H(A_n^{[\mathcal{P}_1]}|\mathcal{Q}) + H(A_{2:N}^{[\mathcal{P}_2]},...,A_{2:N}^{[\mathcal{P}_\beta]}|A_1,\mathcal{X}^{(L)}_{\mathcal{P}_1},\mathcal{Q})\\
	&=  \sum_{n = 1}^{N}H(A_n^{[\mathcal{P}_1]}|\mathcal{Q}) + H(A_{1:N}^{[\mathcal{P}_2]},...,A_{1:N}^{[\mathcal{P}_\beta]}|\mathcal{X}^{(L)}_{\mathcal{P}_1},\mathcal{Q}) - H(A_1|\mathcal{X}^{(L)}_{\mathcal{P}_1}, \mathcal{Q}),\label{ref_last_main1}
	\end{align}
where \eqref{ref4main1} follows \eqref{eqn:indQ}; \eqref{ref1main1} follows from \eqref{corC} and the fact that $\mathcal{P}_1,\ldots,\mathcal{P}_\beta$ are distinct sets that cover all possible inner products and thus every inner product can be decoded from $A_{1:N}^{[\mathcal{P}_1]},...,A_{1:N}^{[\mathcal{P}_\beta]}$; \eqref{ref2main1} holds thanks to \eqref{reliC}; \eqref{ref3main1} is true because of \eqref{answersym}; and \eqref{ref5main1} is due to \eqref{corC}.

We can see that,
\begin{align}
	H(A_{1:N}^{[\mathcal{P}_2]},...,A_{1:N}^{[\mathcal{P}_\beta]}|\mathcal{X}^{(L)}_\mathcal{P},\mathcal{Q}) &\leq H(A_{1:N}^{[\mathcal{P}_2]},...,A_{1:N}^{[\mathcal{P}_\beta]},\mathcal{X}^{(L)}|\mathcal{X}^{(L)}_\mathcal{P},\mathcal{Q})=H(\mathcal{X}^{(L)}|\mathcal{X}^{(L)}_\mathcal{P},\mathcal{Q})\label{reliCres}\\
	&= H(\mathcal{X}^{(L)}) - H(\mathcal{X}^{(L)}_\mathcal{P})  \\
&\leq  \dfrac{K(K+1)}{2}\log(q)  - P\log(q) + O(\lambda_2^{L-1})\label{ref_last2_main1},
\end{align}
where \eqref{reliCres} is true because of \eqref{reliC} and the last inequality follows from \eqref{eqn:con_ent1}.
Now, combining the result of Lemma~\ref{MPIR_lemma_rem_info} with \eqref{ref_last_main1} and \eqref{ref_last2_main1} results in:
	\begin{align}
	\dfrac{K(K+1)}{2}\log(q)  - O(\lambda_2^{L-1})\leq &\sum_{n = 1}^{N}H(A_n^{[\mathcal{P}_1]}|\mathcal{Q}) +  \dfrac{K(K+1)}{2}\log(q) - P\log(q) + O(\lambda_2^{L-1}) \nonumber\\
&- \dfrac{K(K+1)- 2P}{2N}\log(q) + O(\lambda_2^{L-1}),
	\end{align}
which can be written as,
\begin{align}
P\log(q) + \dfrac{K(K+1)- 2P}{2N}\log(q) - O(\lambda_2^{L-1})\leq \sum_{n = 1}^{N}H(A_n^{[\mathcal{P}_1]}|\mathcal{Q}),
\end{align}
and further is simplified as,
\begin{align}
\sum_{n = 1}^{N}H(A_n^{[\mathcal{P}_1]}|\mathcal{Q})&\geq
P\log(q)\left(1 + \dfrac{K(K+1)- 2P}{2PN} - O(\lambda_2^{L-1})\right)\\
&\geq
H(\mathcal{X}^{(L)}_\mathcal{P})\left(1 + \dfrac{K(K+1)- 2P}{2PN} - O(\lambda_2^{L-1})\right),
\end{align}
which completes the converse for Case 1 as:
\begin{align}
1 + \dfrac{K(K+1)- 2P}{2PN} - O(\lambda_2^{L-1}) \leq \dfrac{1}{C}.\label{case1Con}
\end{align}
\\
\textbf{Case 2}: $\dfrac{K(K+1)}{2P} > 2$.

Here, similar to the proof of MPIR, we create an inductive relation and use the result of Case~1 as the base induction step. In order to continue with the induction proof we need to introduce a slightly different problem formulation (we call as modified problem). Assume that the user wishes to retrieve inner products indexed by the set $\mathcal{P}_1$ while ensuring privacy over $\forall\mathcal{P}\subseteq\mathcal{T}_1\subseteq\mathcal{T}$ where $|\mathcal{T}_1|=T_1$ and $\mathcal{T}_1$ is fixed and known to all. We denote the query set in this problem by $\hat{\mathcal{Q}}$ and the answer from $n$-th server by $\hat{A}^{[\mathcal{P}]}_n$. Number of all possible sets $\mathcal{P}$ is $\hat{\beta}=\binom{T_1}{P}$.

Note that for $\mathcal{T}_1=\mathcal{T}$, the modified problem reduces to the original problem (of Section~\ref{sec:problem}). The reason for introducing the modified problem is that $|\mathcal{T}|$ in the original problem is equal to $|\mathcal{T}|=K(K+1)/2$ (that is $|\mathcal{T}|$ cannot take any arbitrary integer). However, for the inductive step, we need the number of inner products to take any integer.

It is easy to see that the result of case 1 is still true for the modified problem. Hence, when $T_1/P \leq 2$ we have,
\begin{align}
NH(\hat{A}_1|\hat{\mathcal{Q}})&\geq
P\log(q)\left(1 + \dfrac{T_1 - P}{PN} - O(\lambda_2^{L-1})\right).\label{inducBase}
\end{align}

We use this result as our induction base step. Now we proceed to prove that for case $T_1/P > 2$ we have,

\begin{align}
NH(\hat{A}_1|\hat{\mathcal{Q}}) \geq P\log(q)\left[\sum_{i=0}^{{\left\lfloor \text{\tiny{\(\dfrac{T_1}{P}\)}} \right\rfloor - 1}}\dfrac{1}{N^i} + \left(\dfrac{T_1}{P} - \left\lfloor \dfrac{T_1}{P} \right\rfloor\right)\dfrac{1}{N^{\left\lfloor \text{\tiny{\(\dfrac{T_1}{P}\)}} \right\rfloor }}-O(\lambda_2^{L-1})\right].\label{inducresult}
\end{align}
We achieve \eqref{inducresult} by induction with \eqref{inducBase} as induction base. First, we provide a lemma to derive an upper bound on the remaining information in answer $\hat{A}_{2:N}^{[\mathcal{P}_2]}$ conditioned on inner products indexed by the set $\mathcal{P}_1\subseteq\mathcal{T}_1$ when $\mathcal{P}_1$ and $\mathcal{P}_2 $ have the same size but have no similar inner product in their corresponding sets.

\begin{lemma}\label{interference}
For two sets $\mathcal{P}_1\subseteq\mathcal{T}_1$ and $\mathcal{P}_2\subseteq\mathcal{T}_1$, if $|\mathcal{P}_1| = |\mathcal{P}_2| = P$ and $\mathcal{P}_1\cap\mathcal{P}_2 = \phi$, then the following inequality holds.
	\begin{align}
	H(\hat{A}_{2:N}^{[\mathcal{P}_2]}|\mathcal{X}^{(L)}_{\mathcal{P}_1},\hat{\mathcal{Q}})\leq(N-1)[NH(\hat{A}_1|\hat{\mathcal{Q}}) - P\log(q)] + O(\lambda_2^{L-1}).
	\end{align}
\end{lemma}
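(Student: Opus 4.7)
The plan is a two-stage argument. First, use server symmetry (Lemma \ref{symL}) together with the fixed-first-answer convention \eqref{answersym} to reduce the multi-server joint entropy on the left-hand side to the single-server quantity $H(\hat{A}_1|\mathcal{X}^{(L)}_{\mathcal{P}_1},\hat{\mathcal{Q}})$. Second, upper-bound this single-server quantity by playing the correctness condition for the answers $\hat{A}_{1:N}^{[\mathcal{P}_1]}$ against the subadditivity bound $NH(\hat{A}_1|\hat{\mathcal{Q}})$ on the total entropy of those $N$ answers, extracting $H(\mathcal{X}^{(L)}_{\mathcal{P}_1})$ as a subtracted term and then invoking Lemma \ref{entropy}. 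Interestingly, the disjointness hypothesis $\mathcal{P}_1\cap\mathcal{P}_2=\phi$ will not be needed inside the inequalities below; it is a structural assumption required by the inductive step that consumes this lemma.

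For the first stage, I would apply subadditivity
\begin{align*}
H(\hat{A}_{2:N}^{[\mathcal{P}_2]}|\mathcal{X}^{(L)}_{\mathcal{P}_1},\hat{\mathcal{Q}}) \leq \sum_{n=2}^{N} H(\hat{A}_n^{[\mathcal{P}_2]}|\mathcal{X}^{(L)}_{\mathcal{P}_1},\hat{\mathcal{Q}}),
\end{align*}
and invoke Lemma \ref{symL} term-by-term to equate each summand with $H(\hat{A}_1^{[\mathcal{P}_2]}|\mathcal{X}^{(L)}_{\mathcal{P}_1},\hat{\mathcal{Q}})$. By \eqref{answersym} we have $\hat{A}_1^{[\mathcal{P}_2]}=\hat{A}_1$, so the right-hand side collapses to $(N-1)\, H(\hat{A}_1|\mathcal{X}^{(L)}_{\mathcal{P}_1},\hat{\mathcal{Q}})$.

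For the second stage, I would derive the chain
\begin{align*}
N\, H(\hat{A}_1|\hat{\mathcal{Q}})
&\geq H(\hat{A}_{1:N}^{[\mathcal{P}_1]}|\hat{\mathcal{Q}})
= H(\hat{A}_{1:N}^{[\mathcal{P}_1]}, \mathcal{X}^{(L)}_{\mathcal{P}_1}|\hat{\mathcal{Q}}) \\
&= H(\mathcal{X}^{(L)}_{\mathcal{P}_1}) + H(\hat{A}_{1:N}^{[\mathcal{P}_1]}|\mathcal{X}^{(L)}_{\mathcal{P}_1},\hat{\mathcal{Q}})
\geq H(\mathcal{X}^{(L)}_{\mathcal{P}_1}) + H(\hat{A}_1|\mathcal{X}^{(L)}_{\mathcal{P}_1},\hat{\mathcal{Q}}),
\end{align*}
where the first inequality combines Lemma \ref{symL} with subadditivity, the first equality uses the correctness condition \eqref{corC} (so $\mathcal{X}^{(L)}_{\mathcal{P}_1}$ is a deterministic function of $\hat{A}_{1:N}^{[\mathcal{P}_1]}$ given $\hat{\mathcal{Q}}$), the second equality is the chain rule together with \eqref{eqn:indQ} (queries independent of files, giving $H(\mathcal{X}^{(L)}_{\mathcal{P}_1}|\hat{\mathcal{Q}})=H(\mathcal{X}^{(L)}_{\mathcal{P}_1})$), and the last inequality is monotonicity combined with \eqref{answersym}. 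Invoking Lemma \ref{entropy} to bound $H(\mathcal{X}^{(L)}_{\mathcal{P}_1}) \geq P\log(q) - O(\lambda_2^{L-1})$ then yields
\begin{align*}
H(\hat{A}_1|\mathcal{X}^{(L)}_{\mathcal{P}_1},\hat{\mathcal{Q}}) \leq N H(\hat{A}_1|\hat{\mathcal{Q}}) - P\log(q) + O(\lambda_2^{L-1}),
\end{align*}
and multiplying through by $(N-1)$ gives the advertised bound. The only subtle bookkeeping point is carrying the additive $O(\lambda_2^{L-1})$ slack that Lemma \ref{entropy} introduces because $X^{(L)}$ is only approximately uniform at finite $L$; fortunately it propagates only additively and is merely scaled by the constant $N-1$, so it does not affect the structural form of the bound and it vanishes in the $L\to\infty$ limit used later in the converse.
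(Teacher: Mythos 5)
Your proof is correct and arrives at exactly the claimed bound, but it takes a somewhat different decomposition from the paper's. The paper never isolates the single-server quantity $H(\hat{A}_1|\mathcal{X}^{(L)}_{\mathcal{P}_1},\hat{\mathcal{Q}})$: after subadditivity it enlarges each term $H(\hat{A}_n^{[\mathcal{P}_2]}|\mathcal{X}^{(L)}_{\mathcal{P}_1},\hat{\mathcal{Q}})$ to the joint entropy of the \emph{mixed} answer set $\{\hat{A}_{1:n-1}^{[\mathcal{P}_1]},\hat{A}_n^{[\mathcal{P}_2]},\hat{A}_{n+1:N}^{[\mathcal{P}_1]}\}$, argues via privacy together with \eqref{corC} that $\mathcal{X}^{(L)}_{\mathcal{P}_1}$ is decodable from this hybrid set, and only then subtracts $H(\mathcal{X}^{(L)}_{\mathcal{P}_1})$ and applies subadditivity and symmetry to get $NH(\hat{A}_1|\hat{\mathcal{Q}}) - P\log(q) + O(\lambda_2^{L-1})$ per term. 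You instead invoke Lemma~\ref{symL} to replace each $\hat{A}_n^{[\mathcal{P}_2]}$ by $\hat{A}_1$ already under the conditioning on $\mathcal{X}^{(L)}_{\mathcal{P}_1}$, and then bound $H(\hat{A}_1|\mathcal{X}^{(L)}_{\mathcal{P}_1},\hat{\mathcal{Q}})$ using only the pure answer set $\hat{A}_{1:N}^{[\mathcal{P}_1]}$. Both routes rest on the same ingredients --- server symmetry, correctness, subadditivity, independence of queries and files, and Lemma~\ref{entropy} --- and yield the identical bound; yours is arguably cleaner because the cross-query reasoning is confined entirely to Lemma~\ref{symL} rather than reappearing as a decodability claim about a hybrid answer set. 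Your observation that the disjointness hypothesis $\mathcal{P}_1\cap\mathcal{P}_2=\phi$ is not actually consumed inside the lemma is also consistent with the paper, where it likewise plays no role until the inductive step; and the $(N-1)\,O(\lambda_2^{L-1})$ bookkeeping is harmless, exactly as you note.
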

\begin{proof}
	\begin{align}
	H(\hat{A}_{2:N}^{[\mathcal{P}_2]}|\mathcal{X}^{(L)}_{\mathcal{P}_1},\hat{\mathcal{Q}})
	&\leq \sum_{n=2}^{N}H(\hat{A}_{n}^{[\mathcal{P}_2]}|\mathcal{X}^{(L)}_{\mathcal{P}_1},\hat{\mathcal{Q}})\\
	&\leq \sum_{n=2}^{N}H(\hat{A}_{1:n-1}^{[\mathcal{P}_1]},\hat{A}_{n}^{[\mathcal{P}_2]},\hat{A}_{n+1:N}^{[\mathcal{P}_1]}|\mathcal{X}^{(L)}_{\mathcal{P}_1},\hat{\mathcal{Q}})\\
	&=\sum_{n=2}^{N}H(\hat{A}_{1:n-1}^{[\mathcal{P}_1]},\hat{A}_{n}^{[\mathcal{P}_2]},\hat{A}_{n+1:N}^{[\mathcal{P}_1]},\mathcal{X}^{(L)}_{\mathcal{P}_1}|\hat{\mathcal{Q}}) - H(\mathcal{X}^{(L)}_{\mathcal{P}_1}|\hat{\mathcal{Q}})\\
	&=\sum_{n=2}^{N}H(\hat{A}_{1:n-1}^{[\mathcal{P}_1]},\hat{A}_{n}^{[\mathcal{P}_2]},\hat{A}_{n+1:N}^{[\mathcal{P}_1]}|\hat{\mathcal{Q}}) + H(\mathcal{X}^{(L)}_{\mathcal{P}_1}|\hat{A}_{1:n-1}^{[\mathcal{P}_1]},\hat{A}_{n}^{[\mathcal{P}_2]},\hat{A}_{n+1:N}^{[\mathcal{P}_1]},\hat{\mathcal{Q}})- H(\mathcal{X}^{(L)}_{\mathcal{P}_1}) \label{mpg21}\\
	&\leq \sum_{n=2}^{N}NH(\hat{A}_1|\hat{\mathcal{Q}}) - P\log(q) + O(\lambda_2^{L-1})\label{mpg22}\\
	&=(N-1)[NH(\hat{A}_1|\hat{\mathcal{Q}}) - P\log(q)] + O(\lambda_2^{L-1}),
	\end{align}
where \eqref{mpg21} holds because of independence of queries and data files; and \eqref{mpg22} follows from \eqref{eqn:con_ent1}, the symmetry across the servers and the fact that $\mathcal{X}^{(L)}_{\mathcal{P}_1}$ can be calculated from $\hat{A}_{1:n-1}^{[\mathcal{P}_1]},\hat{A}_{n}^{[\mathcal{P}_2]},\hat{A}_{n+1:N}^{[\mathcal{P}_1]}$ and queries which is the result of \eqref{corC}.
\end{proof}

Now we construct the inductive step, similar to MPIR but tailored to our setting. Assume that two subsets $\mathcal{P}_1$ and $\mathcal{P}_2$ are chosen from all subsets $\mathcal{P}_i\subseteq\mathcal{T}_1,i\in[1:\hat{\beta}],|\mathcal{P}| = P$ such that $\mathcal{P}_1\cap\mathcal{P}_2=\phi$. We see,
\begin{align}
&T_1\log(q) - O(\lambda_2^{L-1})\leq H(\hat{A}_1,\hat{A}_{2:N}^{[\mathcal{P}_1]},...,\hat{A}_{2:N}^{[\mathcal{P}_{\hat{\beta}}]}|\hat{\mathcal{Q}})\label{eqn:case2_1}\\
&=H(\hat{A}_1,\hat{A}_{2:N}^{[\mathcal{P}_1]}|\hat{\mathcal{Q}}) + H(\hat{A}_{2:N}^{[\mathcal{P}_2]}|\hat{A}_1,\hat{A}_{2:N}^{[\mathcal{P}_1]},\hat{\mathcal{Q}}) + H(\hat{A}_{2:N}^{[\mathcal{P}_3]},...,\hat{A}_{2:N}^{[\mathcal{P}_{\hat{\beta}}]}|\hat{A}_1,\hat{A}_{2:N}^{[\mathcal{P}_1]},\hat{A}_{2:N}^{[\mathcal{P}_2]},\hat{\mathcal{Q}})\\
&\leq NH(\hat{A}_1|\hat{\mathcal{Q}}) + H(\hat{A}_{2:N}^{[\mathcal{P}_2]}|\hat{A}_1,\hat{A}_{2:N}^{[\mathcal{P}_1]},\mathcal{X}^{(L)}_{\mathcal{P}_1},\hat{\mathcal{Q}}) + H(\hat{A}_{2:N}^{[\mathcal{P}_3]},...,\hat{A}_{2:N}^{[\mathcal{P}_{\hat{\beta}}]}|\hat{A}_1,\hat{A}_{2:N}^{[\mathcal{P}_1]},\hat{A}_{2:N}^{[\mathcal{P}_2]},\mathcal{X}^{(L)}_{\mathcal{P}_1},\mathcal{X}^{(L)}_{\mathcal{P}_2},\hat{\mathcal{Q}}) \label{mpg23}\\
&\leq NH(\hat{A}_1|\hat{\mathcal{Q}}) + H(\hat{A}_{2:N}^{[\mathcal{P}_2]}|\mathcal{X}^{(L)}_{\mathcal{P}_1},\hat{\mathcal{Q}}) + H(\hat{A}_{2:N}^{[\mathcal{P}_3]},...,\hat{A}_{2:N}^{[\mathcal{P}_{\hat{\beta}}]}|\hat{A}_1,\mathcal{X}^{(L)}_{\mathcal{P}_1},\mathcal{X}^{(L)}_{\mathcal{P}_2},\hat{\mathcal{Q}})\\
&= NH(\hat{A}_1|\hat{\mathcal{Q}}) + H(\hat{A}_{2:N}^{[\mathcal{P}_2]}|\mathcal{X}^{(L)}_{\mathcal{P}_1},\hat{\mathcal{Q}}) + H(\hat{A}_{1:N}^{[\mathcal{P}_3]},...,\hat{A}_{1:N}^{[\mathcal{P}_{\hat{\beta}}]}|\mathcal{X}^{(L)}_{\mathcal{P}_1},\mathcal{X}^{(L)}_{\mathcal{P}_2},\hat{\mathcal{Q}}) - H(\hat{A}_1|\mathcal{X}^{(L)}_{\mathcal{P}_1},\mathcal{X}^{(L)}_{\mathcal{P}_2},\hat{\mathcal{Q}})\\
&\leq NH(\hat{A}_1|\hat{\mathcal{Q}}) + H(\hat{A}_{2:N}^{[\mathcal{P}_2]}|\mathcal{X}^{(L)}_{\mathcal{P}_1},\hat{\mathcal{Q}}) + H(\mathcal{X}^{(L)}_{\mathcal{T}_1}|\mathcal{X}^{(L)}_{\mathcal{P}_1},\mathcal{X}^{(L)}_{\mathcal{P}_2}) - H(\hat{A}_1|\mathcal{X}^{(L)}_{\mathcal{P}_1},\mathcal{X}^{(L)}_{\mathcal{P}_2},\hat{\mathcal{Q}})\label{mpg24}\\
&= NH(\hat{A}_1|\hat{\mathcal{Q}}) + H(\hat{A}_{2:N}^{[\mathcal{P}_2]}|\mathcal{X}^{(L)}_{\mathcal{P}_1},\hat{\mathcal{Q}}) + H(\mathcal{X}^{(L)}_{\mathcal{T}_1})-H(\mathcal{X}^{(L)}_{\mathcal{P}_1},\mathcal{X}^{(L)}_{\mathcal{P}_2}) - H(\hat{A}_1|\mathcal{X}^{(L)}_{\mathcal{P}_1},\mathcal{X}^{(L)}_{\mathcal{P}_2},\hat{\mathcal{Q}})\\
&\leq NH(\hat{A}_1|\hat{\mathcal{Q}}) + H(\hat{A}_{2:N}^{[\mathcal{P}_2]}|\mathcal{X}^{(L)}_{\mathcal{P}_1},\hat{\mathcal{Q}}) + T_1\log(q)- 2P\log(q) + O(\lambda_2^{L-1})\nonumber\\
&\quad - H(\hat{A}_1|\mathcal{X}^{(L)}_{\mathcal{P}_1},\mathcal{X}^{(L)}_{\mathcal{P}_2},\hat{\mathcal{Q}})\label{eqn:case2_3}\\
&\leq NH(\hat{A}_1|\hat{\mathcal{Q}}) + (N-1)[NH(\hat{A}_1|\hat{\mathcal{Q}}) - P\log(q)]+ T_1\log(q) - 2P\log(q) + O(\lambda_2^{L-1})\nonumber\\
&\quad - H(\hat{A}_1|\mathcal{X}^{(L)}_{\mathcal{P}_1},\mathcal{X}^{(L)}_{\mathcal{P}_2},\hat{\mathcal{Q}}) ,\label{eqn:case2_2}
\end{align}
where \eqref{eqn:case2_1} follows from the fact that knowing $A_1,A_{2:N}^{[\mathcal{P}_1]},...,A_{2:N}^{[\mathcal{P}_{\hat{\beta}}]}$ one can obtain all inner products indexed by $\mathcal{T}_1$ (noting that it is always possible to fix the answers of one of the servers (to be independent of the desired set) as done in \eqref{answersym});
\eqref{mpg23} is true due to the symmetry across the servers and the fact that $\mathcal{X}^{(L)}_{\mathcal{P}_1}$ and $\mathcal{X}^{(L)}_{\mathcal{P}_2}$ have no more information if we have $A_1,A_{2:N}^{[\mathcal{P}_1]},A_{2:N}^{[\mathcal{P}_2]}$ and queries which is the result of \eqref{corC}; \eqref{mpg24} is true because of \eqref{reliC} similar to \eqref{reliCres}; and \eqref{eqn:case2_3} follows from \eqref{eqn:con_ent1}.


We rewrite \eqref{eqn:case2_2} as,
\begin{align}
N^2H(\hat{A}_1|\hat{\mathcal{Q}}) \geq (N + 1)P\log(q) + H(\hat{A}_1|\mathcal{X}^{(L)}_{\mathcal{P}_1},\mathcal{X}^{(L)}_{\mathcal{P}_2},\hat{\mathcal{Q}}) - O(\lambda_2^{L-1}),
\end{align}
which can also be written as,
\begin{align}
NH(\hat{A}_1|\hat{\mathcal{Q}}) \geq \left(1 + \dfrac{1}{N}\right)P\log(q) + \dfrac{1}{N}H(\hat{A}_1|\mathcal{X}^{(L)}_{\mathcal{P}_1},\mathcal{X}^{(L)}_{\mathcal{P}_2},\hat{\mathcal{Q}}) - O(\lambda_2^{L-1}). \label{inducstep}
\end{align}

Now, as mentioned in \cite{banawan2018multi}, $H(\hat{A}_1|\mathcal{X}^{(L)}_{\mathcal{P}_1},\mathcal{X}^{(L)}_{\mathcal{P}_2},\hat{\mathcal{Q}})$ (in the modified problem) is quite similar to the $H(\hat{A}_1|\hat{\mathcal{Q}})$ in an equivalent problem where the user wants to retrieve a subset of inner products indexed by $\mathcal{P}\subseteq\mathcal{T}_2$ and $\mathcal{T}_2 = \mathcal{T}_1\setminus(\mathcal{P}_1\cup\mathcal{P}_2)$ and thus $|\mathcal{T}_2|=T_1-2P$. The difference between the modified problem and its equivalent problem is that in the modified problem we have conditions on $\mathcal{X}^{(L)}_{\mathcal{P}_1},\mathcal{X}^{(L)}_{\mathcal{P}_2}$, while in the equivalent problem the conditions don't exist. These two problem would be completely equivalent if the inner products were mutually independent. However, by adding the conditions, one can obtain all the equations up to \eqref{inducstep} for the equivalent problem with a difference of $O(\lambda_2^{L-1})$.
Therefore, we can say $H(\hat{A}_1|\mathcal{X}^{(L)}_{\mathcal{P}_1},\mathcal{X}^{(L)}_{\mathcal{P}_2},\hat{\mathcal{Q}})$ in the modified problem with total number of inner products $|\mathcal{T}_1|=T_1$ is equal to $H(\hat{A}_1|\hat{\mathcal{Q}})$ in an equivalent problem with total number of inner products $|\mathcal{T}_1|=T_1-2P$. We use this result in our induction step.

Now, we start with the following induction hypothesis when the number of all inner products is $|\mathcal{T}_1|=T_1-2P+1$:
\begin{align}
NH(\hat{A}_1|\hat{\mathcal{Q}}) \geq &P\log(q)\left[\sum_{i=0}^{{\left\lfloor \text{\tiny{\(\dfrac{T_1-2P+1}{P}\)}} \right\rfloor - 1}}\dfrac{1}{N^i}\right]\nonumber\\
&+ P\log(q)\left[\left(\dfrac{T_1-2P+1}{P} - \left\lfloor \dfrac{T_1-2P+1}{P} \right\rfloor\right)\dfrac{1}{N^{\left\lfloor \text{\tiny{\(\dfrac{T_1-2P+1}{P}\)}} \right\rfloor }}-O(\lambda_2^{L-1})\right]\label{induchypo}
\end{align}

To complete the proof by induction we must show that \eqref{induchypo} holds for $|\mathcal{T}_1|=T_1+1$.


Using the equivalency of the modified and the equivalent problem, \eqref{induchypo} (which was written for the equivalent problem with $|\mathcal{T}_1|=T_1-2P+1$) is true for the modified problem if we substitute $H(\hat{A}_1|\mathcal{X}^{(L)}_{\mathcal{P}_1},\mathcal{X}^{(L)}_{\mathcal{P}_2},\hat{\mathcal{Q}})$ by $H(\hat{A}_1|\hat{\mathcal{Q}})$ (as well as $T_1-2P+1$ by  $T_1+1$) to obtain:
\begin{align}
&NH(\hat{A}_1|\mathcal{X}^{(L)}_{\mathcal{P}_1},\mathcal{X}^{(L)}_{\mathcal{P}_2},\hat{\mathcal{Q}}) \\
&\geq P\log(q)\left[\sum_{i=0}^{{\left\lfloor \text{\tiny{\(\dfrac{T_1-2P+1}{P}\)}} \right\rfloor - 1}}\dfrac{1}{N^i}\right]\nonumber\\
&+ P\log(q)\left[\left(\dfrac{T_1-2P+1}{P} - \left\lfloor \dfrac{T_1-2P+1}{P} \right\rfloor\right)\dfrac{1}{N^{\left\lfloor \text{\tiny{\(\dfrac{T_1-2P+1}{P}\)}} \right\rfloor }}-O(\lambda_2^{L-1})\right]\\
&= P\log(q)\left[\sum_{i=0}^{{\left\lfloor \text{\tiny{\(\dfrac{T_1+1}{P}\)}} \right\rfloor - 3}}\dfrac{1}{N^i} + \left(\dfrac{T_1+1}{P} - \left\lfloor \dfrac{T_1+1}{P} \right\rfloor\right)\dfrac{1}{N^{\left\lfloor \text{\tiny{\(\dfrac{T_1+1}{P}\)}} \right\rfloor - 2}}-O(\lambda_2^{L-1})\right]\label{eqn:case2_5}
\end{align}

Combining \eqref{inducstep} and \eqref{eqn:case2_5} results in,
\begin{align}
&NH(\hat{A}_1|\hat{\mathcal{Q}}) \geq \left(1 + \dfrac{1}{N}\right)P\log(q) \nonumber\\
&+ \dfrac{1}{N^2}P\log(q)\left[\sum_{i=0}^{{\left\lfloor \text{\tiny{\(\dfrac{T_1+1}{P}\)}} \right\rfloor - 3}}\dfrac{1}{N^i} + \left(\dfrac{T_1+1}{P} - \left\lfloor \dfrac{T_1+1}{P} \right\rfloor\right)\dfrac{1}{N^{\left\lfloor \text{\tiny{\(\dfrac{T_1+1}{P}\)}} \right\rfloor - 2}}-O(\lambda_2^{L-1})\right]\\
&=P\log(q)\left[\sum_{i=0}^{{\left\lfloor \text{\tiny{\(\dfrac{T_1+1}{P}\)}} \right\rfloor - 1}}\dfrac{1}{N^i} + \left(\dfrac{T_1+1}{P} - \left\lfloor \dfrac{T_1+1}{P} \right\rfloor\right)\dfrac{1}{N^{\left\lfloor \text{\tiny{\(\dfrac{T_1+1}{P}\)}} \right\rfloor}} - O(\lambda_2^{L-1})\right]\label{inducStep}
\end{align}

From induction hypothesis \eqref{induchypo}, we have proven inductive step in \eqref{inducStep} and thus \eqref{inducresult} is true. As mentioned before, when $\mathcal{T}_1=\mathcal{T}$ or equivalently $T_1 = K(K+1)/2$, the modified problem reduces to the original problem. Therefore \eqref{inducresult} holds for the original problem. Therefore, we have,

\begin{align}
\sum_{i=0}^{{\left\lfloor \text{\tiny{\(\dfrac{K(K+1)}{2P}\)}} \right\rfloor - 1}}\dfrac{1}{N^i} + \left(\dfrac{K(K+1)}{2P} - \left\lfloor \dfrac{K(K+1)}{2P} \right\rfloor\right)\dfrac{1}{N^{\left\lfloor \text{\tiny{\(\dfrac{K(K+1)}{2P}\)}} \right\rfloor}} - O(\lambda_2^{L-1})
&\leq \dfrac{NH(A_1|\mathcal{Q})}{P\log(q)} \\
&\leq \dfrac{NH(A_1|\mathcal{Q})}{H(\mathcal{X}^{(L)}_\mathcal{P})} \\
&\leq \dfrac{\sum_{n=1}^{N}H(A_n^{\mathcal{P}})}{H(\mathcal{X}^{(L)}_\mathcal{P})} \\
&\leq \dfrac{1}{C}\label{case2Con}
\end{align}

From the results of two cases \eqref{case1Con} and \eqref{case2Con} we can write,
\begin{align}
\dfrac{1}{\underline{R}_{\mathsf{MPIR}}(K(K+1)/2,P,N)} - O(\lambda_2^{L-1}) < \dfrac{1}{C}\label{con}.
\end{align}

Comparing  \eqref{eqn:MPIR} with the results of converse, \eqref{con}, and achievability, \eqref{ach}, the proof of Theorem \ref{mpl2} is complete.
%
%
%

	\bibliography{References}
	\bibliographystyle{ieeetr}
	
\end{document}